\theoremstyle{definition}
\newtheorem{definition}{Definition}
\newtheorem{lemma}{Lemma}
\newtheorem{corollary}{Corollary}
\newtheorem{example}{Example}
\newtheorem{remark}{Remark}
\newtheorem{proposition}{Proposition}
\title{Minimum information Markov model}
\author[1]{Issey Sukeda}
\author[2]{Tomonari Sei}
\affil[1]{The University of Tokyo, RIKEN Center for Brain Science}
\affil[2]{The University of Tokyo}
\date{}
\begin{document}
\maketitle
\begin{abstract}
The analysis of high-dimensional time series data has become increasingly important across a wide range of fields. Recently, a method for constructing the minimum information Markov kernel on finite state spaces was established. In this study, we propose a statistical model based on a parametrization of its dependence function, which we call the \textit{Minimum Information Markov Model}.
We show that its parametrization induces an orthogonal structure between the stationary distribution and the dependence function, and that the model arises as the optimal solution to a divergence rate minimization problem.
In particular, for the Gaussian autoregressive case, we establish the existence of the optimal solution to this minimization problem, a nontrivial result requiring a rigorous proof.
For parameter estimation, our approach exploits the conditional independence structure inherent in the model, which is supported by the orthogonality. Specifically, we develop several estimators, including conditional likelihood and pseudo likelihood estimators, for the minimum information Markov model in both univariate and multivariate settings. We demonstrate their practical performance through simulation studies and applications to real-world time series data.
\end{abstract}

\section{Introduction}

The analysis of high-dimensional time series data has become increasingly important across a variety of fields, including finance, neuroscience, genomics, and environmental science. In such applications, observations are not only multivariate but also temporally dependent, posing significant challenges for statistical modeling and inference. Traditional time series models, such as vector autoregressive (VAR) models, often require strong assumptions on its structure to remain computationally and statistically feasible. 
Other approaches, such as mixture transition distribution model~\cite{berchtold2002mixture}, employ higher-order Markov chains which is typically described by a parametric family of Markov kernels. 

To address these challenges, some approaches have focused on specifying the stationary distribution of the given time series first. If a parametric model of marginal stationary distributions is given, the estimation of Markov kernel is relatively easy because the methods for independent data can be applied under ergodicity.
Similar approaches are relatively common in dependence modeling for high-dimensional i.i.d. data. One typical example is the copula modeling, where the joint distribution is decomposed into the corresponding marginal distributions and the copula function that combines these marginals. 
Contrarily, such ideas are not popular in time series analysis. Although studies on copula-based time series modeling has been growing~\cite{patton2012review,sun2020copula}, these are quite different because the focus is to model the transition kernel using copula modeling. 

In this paper, we propose a new framework for modeling high-dimensional time series using higher-order Markov kernel called the \textit{minimum information Markov kernel}. 
This Markov kernel is recently established by Sei~\cite{sei2024constructingmarkovchainsgiven} for finite state spaces based on the information geometry of the exponential family of Markov chains~\cite{hayashi2016information,nagaoka2005exponentialfamilymarkovchains}, and is characterized by two components: the stationary distribution and the dependence function. 
The key feature of this kernel is that it enables the modeling of dependence structure and the marginal distribution separately. This approach is common in statistical modeling for joint distributions under i.i.d. settings, especially known by the copula modeling~\cite{nelsen2006introduction}. 
To incorporate its advantages to time series domain, our idea is to assign parameters linearly on its dependence function of the minimum information Markov kernel and use it as a statistical model, which we call the \textit{Minimum information Markov model} hereafter. Our approach leverages conditional independence structures within this kernel to reduce model complexity while retaining flexibility. Specifically, we develop several estimators for our minimum information Markov model in a high-dimensional regime and demonstrate their practical performance through simulation experiments and real data applications.

The name \textit{Minimum information Markov model} originates from the fact that the model can be interpreted as a minimum information framework, as introduced in previous studies. Here, ``information'' refers to negative Shannon entropy in the information-theoretic sense.
The concept of a minimum information (or minimally informative) copula was first proposed by Meeuwissen and Bedford~\cite{meeuwissen1997minimally}, who derived it as the optimal solution to an information minimization problem subject to a constraint fixing Spearman's $\rho$.
A number of studies have investigated its properties, characterizations~\cite{sukeda2025minimum,sukeda2025frank}, estimations~\cite{chen2024proper}, applications~\cite{shekari2023maximum}, and extensions~\cite{pougaza2012new}. One of the most general extensions is the minimum information dependence model~\cite{sei2024minimum}, which extends the copula to higher dimensions, arbitrary domains, and arbitrary marginal distributions. These prior works focus exclusively on i.i.d. data, whereas our work can be seen as a straightforward extension to time series data and Markov models.

The remainder of the paper is organized as follows.
Section 2 introduces the minimum information Markov model and shows that its parametrization induces an orthogonal structure between the stationary distribution and the dependence function, and that the model arises as the optimal solution to a divergence rate minimization problem. In particular, for the Gaussian autoregressive case, we further establish the existence of an optimal solution to this minimization problem, a nontrivial result requiring a rigorous proof. Other cases remain an open problem.
Section 3 presents estimation procedures for the proposed model. 
Section 4 reports simulation studies and empirical analyses. 
Section 5 applies the proposed method to real neural data. 
Section 6 concludes with a discussion of possible extensions. 
Proofs are primarily deferred to the Appendix.

\subsection{Preliminaries and notation}
Throughout the paper, $I$ denotes the identity matrix with appropriate size. The matrix size will not be displayed explicitly. 
$\otimes$ denotes the Kronecker product. 
To avoid confusion, $\mathrm{vec}(\cdot)$ denotes the vectorization operator that stacks the column vectors, not the row vectors.
The compatibility between the Kronecker product and the vectorization operator will be primarily used: $\mathrm{vec}(ABC) = (C^\top \otimes A) \mathrm{vec}(B)$ for three square matrices $A, B, C$. 

The Gaussian integral for $p$-dimensional variate $x$ is defined as 
$$\int \exp{\left(-\frac{1}{2}x^\top A x + b^\top x\right)}dx = \sqrt{\frac{(2\pi)^p}{\mathrm{det}(A)}} \exp{(\frac{1}{2}b^\top A^{-1} b)}$$
$$\int( x^\top R x)\exp{\left(-\frac{1}{2}x^\top A x\right)}dx = \sqrt{\frac{(2\pi)^p}{\mathrm{det}(A)}} \mathrm{Tr}(RA^{-1})$$
where $\mathrm{Tr}(\cdot)$ denotes the trace of matrix, $A$ and $R$ are positive definite $p\times p$ matrices, and $b$ is a $p$-dimensional vector.

We abuse notation by writing $\mathcal{N}(x;0,B)$ to denote a Gaussian distribution with zero mean and covariance matrix $B$ when the variable $x$ is explicitly shown; we may also write $\mathcal{N}(0,B)$ when the variable is omitted.

For two square matrices $A$ and $B$, the notation $A \succ B$ indicates that $A - B$ is positive definite, while $A \succeq B$ indicates that $A - B$ is positive semidefinite. Unless stated otherwise, $O$ denotes the zero matrix of appropriate size. Accordingly, the positive definiteness of a matrix $A$ is expressed as $A \succ O$.

\section{Minimum information Markov model}

Recently, Sei~\cite{sei2024constructingmarkovchainsgiven} proposed the construction of Markov chains on a finite state space with given dependence and marginal stationary distribution, based on the exponential family of Markov chains~\cite{nagaoka2005exponentialfamilymarkovchains,hayashi2016information}. Its Markov kernel has been named the minimum information Markov kernel, since it is obtained by KL-like divergence rate minimization problems. 

Let $\mathcal{X}$ denote the state space. 
Let $\mathbb{R}_+$ and $\mathbb{R}_{\geq 0}$ be the set of positive and non-negative numbers, respectively.
The set of all positive probability distributions on $\mathcal{X}$ is denoted by $P_+(\mathcal{X})$. 
In what follows, we focus on stationary Markov kernels throughout the paper. 
We formally extend the definition of the minimum information Markov kernel for general state spaces as follows.

\begin{definition}[Minimum information Markov kernel (first order)]
    Given a function $H:\mathcal{X}^2 \to \mathbb{R}$ and the marginal distribution $r \in P_+(\mathcal{X})$, the Markov kernel
    \begin{equation} \label{eq:mininfo-markov-kernel}
    w(y|x) = \exp{(H(x,y) + \kappa(y) - \kappa(x) - \delta(y))}
    \end{equation}
    with the stationary distribution $p_w(x) = r(x)$ is called the minimum information Markov kernel when the normalizing functions $\kappa:\mathcal{X}\to \mathbb{R}$ and $\delta:\mathcal{X}\to \mathbb{R}$ both exist. 
\end{definition}

In Sei~\cite{sei2024constructingmarkovchainsgiven}, the theoretical guarantee of its existence and the framework of specifying the dependence function $h(x,y)$ and the stationary distribution $r(x)$ are mainly presented\footnote{Note that Sei~\cite{sei2024constructingmarkovchainsgiven} assumes a finite state space throughout the paper, while the extension of the theory to an infinite state space is only mentioned slightly in Section 5.1, giving a conjecture that the construction is still valid for infinite spaces. Despite of it, we shall apply the minimum information model to the case where the state space is infinite-dimensional such as $\mathcal{X} = \mathbb{R}$. In fact, in AR/VAR models, the corresponding Markov kernel exists thus the argument is valid.}, while the statistical application is limited to a mention in Section 5.2 of their paper.
In this study, we explore a method to utilize this Markov kernel to model $p$-variate time series.
For statistical modeling, it is convenient to consider a linearly parametrized dependence function, hence we replace $H(x,y)$ in \eqref{eq:mininfo-markov-kernel} with $\bm{\theta}^\top \bm{h}(x,y)$. Each $\bm{h}_i$ for $1 \leq i \leq K\ (\in \mathbb{N})$ denotes different type of dependencies and the parameter $\bm{\theta}$ indicates their weights.

\begin{definition}[$d$-th-order minimum information Markov model] \label{def:mininfomarkov}

Let $\mathcal{X} = \mathbb{R}^p$. A sequence $(x_s, x_{s-1}, \dots, x_t)$ is abbreviated as $x_{s:t}$ for $s>t$. Given the observed data $x_{n:1}$ and the dependence function $\bm{h}:\mathcal{X}^{d+1} \to \mathbb{R}^K$ , the $d$-th-order minimum information Markov model is defined as 
\begin{equation} \label{eq:mininfo-markov-model}
p(x_t|x_{t-1:t-d}) = \exp{(\bm{\theta}^\top \bm{h}(x_{t:t-d}) + \kappa(x_{t:t-d+1}) - \kappa(x_{t-1:t-d}) - \delta(x_t))}.
\end{equation}
\end{definition}
\noindent This model is specified by the dependence function $\bm{h}$ and the $p$-dimensional stationary distribution, i.e., the marginal distribution over states at time $t$. The joint distribution is then written as 
\begin{equation}
p(x_{n:1}) = \exp{\left(\sum_{t=d+1}^n  \bm{\theta}^\top \bm{h}(x_{t:t-d}) + \kappa(x_{n:n-d+1}) - \kappa(x_{d:1}) - \sum_{t=d+1}^n 
 \delta(x_t) \right)}. \label{eq:joint}
\end{equation}
Here we point out that the conditional inference is possible due to the form of likelihood function. Specifically, consider a partial permutation on the observed data $x_{n:1}$ where $x_{d:1}$ and $x_{n:n-d+1}$ are fixed and the rest is permuted randomly. It is clear that the terms with intractable regularization functions $\kappa$ and $\delta$ in Equation~\eqref{eq:joint} are invariant to such permutations. Therefore, maximum conditional likelihood estimation (CLE) is applicable by conditioning on such a permutation, which will be formally presented in Section 3.

Conceptually, the minimum information Markov model is characterized by its dependence function and stationary distribution. A natural question that arises is: Does such a kernel exist uniquely? In this section, we first show that, for a given dependence function and stationary distribution, the minimum information Markov model --- if it exists --- is unique. This result follows from the orthogonal structure between the two, as viewed through the lens of information geometry. We then demonstrate that, in the case of Gaussian autoregressive models, which are notable examples of the minimum information Markov model, existence is also guaranteed. Furthermore, we point out that this unique existence yields a new unconstrained parametrization for Gaussian autoregressive models, distinct from the classical parametrization. In general, however, existence is not guaranteed, although we can formally consider several examples included in the minimum information Markov model.

\subsection{Uniqueness and Orthogonality}

\subsubsection{Pythagorean structure}
Information geometric concepts are useful to illustrate the minimum information Markov model.
On a finite state space setting, divergence has been defined for transition matrices of discrete-time Markov chains~\cite{wang2023information, wolfer2023information,sei2024constructingmarkovchainsgiven}. The following definition is an extension of it to an infinite state space setting.
\begin{definition}[Divergence rate]
For two $d$-th order stationary Markov kernels $p$ and $q$, the divergence rate is defined as 
$$D(p|q) = \int p(x_{t-1:t-d}) \left(\int p(x_{t}|x_{t-1:t-d})\log\frac{p(x_{t}|x_{t-1:t-d})}{q(x_t|x_{t-1:t-d})} dx_{t} \right) dx_{t-1:t-d},$$    
where $p(x_{t-1:t-d})$ denotes the joint distribution.
\end{definition}

Using this divergence rate, the following orthogonality holds, leading to a separate view of the dependence function and the stationary distribution given by the minimum information Markov kernel, which is illustrated in Figure~\ref{fig:space_ar1}.
\begin{proposition}[Generalized Pythagorean Theorem~\cite{csiszar1987conditional}] \label{prop:pythagorean}
Let $\mathcal{W}$ denote the set of 1st order stationary Markov kernels defined on $\mathbb{R}^p$.
Let $\mathcal{E}$ be a family of Markov kernels with dependence structure $\bm{\theta}^\top \bm{h}(x_t,x_{t-1})$;
$$\mathcal{E} = \{w(x_t|x_{t-1}) = \exp{(\bm{\theta}^\top \bm{h}(x_t,x_{t-1}) + \kappa(x_t)-\kappa(x_{t-1}) - \delta(x_t))}\ |\ w \in \mathcal{W}, \kappa, \delta: \mathbb{R}^p \to \mathbb{R}\}$$
Let $\mathcal{M}$ be the set of all Markov kernels $w$ satisfying
$$\int w(x_t|x_{t-1}) r(x_{t-1}) dx_{t-1} = r(x_t)$$
for given $r$. 
If there exists a unique $w_* \in \mathcal{M} \cap \mathcal{E}$, then
$$D(w|w_*) + D(w_*|v) = D(w|v), w \in \mathcal{M}, v \in \mathcal{E}.$$
\end{proposition}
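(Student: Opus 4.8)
The plan is to prove the identity by showing that the ``cross term'' $D(w|v) - D(w|w_*) - D(w_*|v)$ vanishes, and for this I would exploit the two structural features of the families: every member of $\mathcal{E}$ carries the \emph{same} dependence part $\bm{\theta}^\top\bm{h}(x_t,x_{t-1})$ (only $\kappa$ and $\delta$ vary), while every member of $\mathcal{M}$ shares the \emph{same} stationary marginal $r$.

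First I would rewrite the difference of the first two divergence rates. Since $w,w_* \in \mathcal{M}$ both have marginal $r$, all three divergence rates are weighted on $x_{t-1}$ by the same law $r$ (for $D(w_*|v)$ the weight is $w_*$'s marginal, which is again $r$), and a direct cancellation of the $\log w(x_t|x_{t-1})$ terms gives
$$D(w|v) - D(w|w_*) = \int r(x_{t-1}) \int w(x_t|x_{t-1}) \log\frac{w_*(x_t|x_{t-1})}{v(x_t|x_{t-1})}\, dx_t\, dx_{t-1}.$$
The target identity is therefore equivalent to the claim that this quantity equals $D(w_*|v) = \int r(x_{t-1}) \int w_*(x_t|x_{t-1}) \log\frac{w_*(x_t|x_{t-1})}{v(x_t|x_{t-1})}\, dx_t\, dx_{t-1}$; that is, the $r$-weighted expectation of $\log(w_*/v)$ should be the same whether the conditional is taken under $w$ or under $w_*$.

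The decisive step is to examine $\log(w_*/v)$. Because $w_*$ and $v$ lie in $\mathcal{E}$ and hence share the term $\bm{\theta}^\top\bm{h}$, this dependence part cancels and the log-ratio becomes additively separable,
$$\log\frac{w_*(x_t|x_{t-1})}{v(x_t|x_{t-1})} = a(x_t) - b(x_{t-1}),$$
with $a := (\kappa_* - \kappa_v) - (\delta_* - \delta_v)$ and $b := \kappa_* - \kappa_v$ each a function of a single coordinate. Integrating a function of $x_t$ alone against $r(x_{t-1})w(x_t|x_{t-1})$ returns $\int a(x_t) r(x_t)\,dx_t$ by the stationarity constraint defining $\mathcal{M}$, while integrating a function of $x_{t-1}$ alone returns $\int b(x_{t-1}) r(x_{t-1})\,dx_{t-1}$ by normalization of the kernel; neither value depends on which kernel of $\mathcal{M}$ is used. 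Applying this to both $w$ and $w_*$ shows both sides equal $\int a\,r - \int b\,r$, so the cross term vanishes and the Pythagorean identity follows. This is the concrete realization, for the divergence rate, of Csisz\'ar's orthogonality between an e-flat family and an m-flat family meeting at their projection point.

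I expect the main obstacle to be conceptual rather than computational: one must correctly identify that the only ``e-direction'' freedom inside $\mathcal{E}$ consists of single-coordinate functions (the $\kappa,\delta$ terms), since it is precisely this that is annihilated by the marginal-fixing constraint of $\mathcal{M}$. Were $\bm{\theta}$ allowed to vary across $\mathcal{E}$, the $\bm{h}$-term would survive and produce a joint-moment mismatch proportional to $\mathbb{E}_{w}[\bm{h}] - \mathbb{E}_{w_*}[\bm{h}]$, which need not vanish; so pinning down this reading of $\mathcal{E}$ is essential to the argument. A secondary point requiring care is integrability: each divergence rate must be finite, which is where I would invoke the standing assumption that the normalizers $\kappa,\delta$ exist (equivalently, that $w_*\in\mathcal{M}\cap\mathcal{E}$ is well defined), together with $r\in P_+(\mathcal{X})$, to justify the interchange of integrals used above.
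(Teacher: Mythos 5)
Your proof is correct and follows essentially the same route as the paper's: both reduce the Pythagorean identity to the observation that $\log(w_*/v)$ has its $\bm{\theta}^\top\bm{h}$ part cancel (since $w_*,v\in\mathcal{E}$ share it), leaving a sum of single-coordinate functions whose expectations under any kernel in $\mathcal{M}$ depend only on the common marginal $r$ and the normalization of the kernel, hence agree for $w$ and $w_*$. The paper merely organizes the same cancellation as $\int\int \{w-w_*\}\log(v/w_*)\,dx_t\,dx_{t-1}=0$ rather than as your identity $D(w|v)-D(w|w_*)=D(w_*|v)$.
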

\begin{proof}
    \begin{align*}
        &D(w|w_*) + D(w_*|v) - D(w|v) \\
        &= \int \int \left( w(x_t,x_{t-1}) \log{\frac{w}{w_*}} + w_*(x_t,x_{t-1}) \log{\frac{w_*}{v}} - w(x_t,x_{t-1}) \log{\frac{w}{v}}\right) dx_tdx_{t-1}\\
        &= \int \int \left( w(x_t,x_{t-1}) \log{\frac{v}{w_*}} + w_*(x_t,x_{t-1}) \log{\frac{w_*}{v}}\right) dx_tdx_{t-1}\\
        &= \int \int \left( \{w(x_t,x_{t-1}) - w_*(x_t,x_{t-1}) \}\log{\frac{v}{w_*}}\right) dx_tdx_{t-1}\\
        &= \int \int  \{w(x_t,x_{t-1}) - w_*(x_t,x_{t-1}) \}((\kappa_v(x_t)-\kappa_v(x_{t-1}) - \delta_v(x_t)) - (\kappa_{w_*}(x_t)- \kappa_{w_*}(x_{t-1}) - \delta_{w_*}(x_t))) dx_tdx_{t-1}\\
        &=0
    \end{align*}
    The last equality follows from the fact that $w$ and $w_*$ shares the same stationary distribution $\nu$. 
    The uniqueness holds because if $w \in \mathcal{M} \cap \mathcal{E}$, then $D(w|w_*) + D(w_*|w) = 0$ holds by taking $v=w$, which implies $w_* = w$.
\end{proof}
\noindent Not only the orthogonality, this proposition states that $w_*$ is the optimal solution, if exists, of the divergence rate minimization problem 
\begin{equation} \label{eq:divergence-rate-minimization}
\mathrm{minimize}_{w \in \mathcal{M}}\ D(w|v), v\in \mathcal{E}.
\end{equation}
The existence of an optimal solution is not guaranteed in general. However, it does hold in cases involving Gaussian autoregressive models, which are notable examples of the minimum information Markov model.

\begin{figure}[t]
    \centering
    \includegraphics[width=0.5\linewidth]{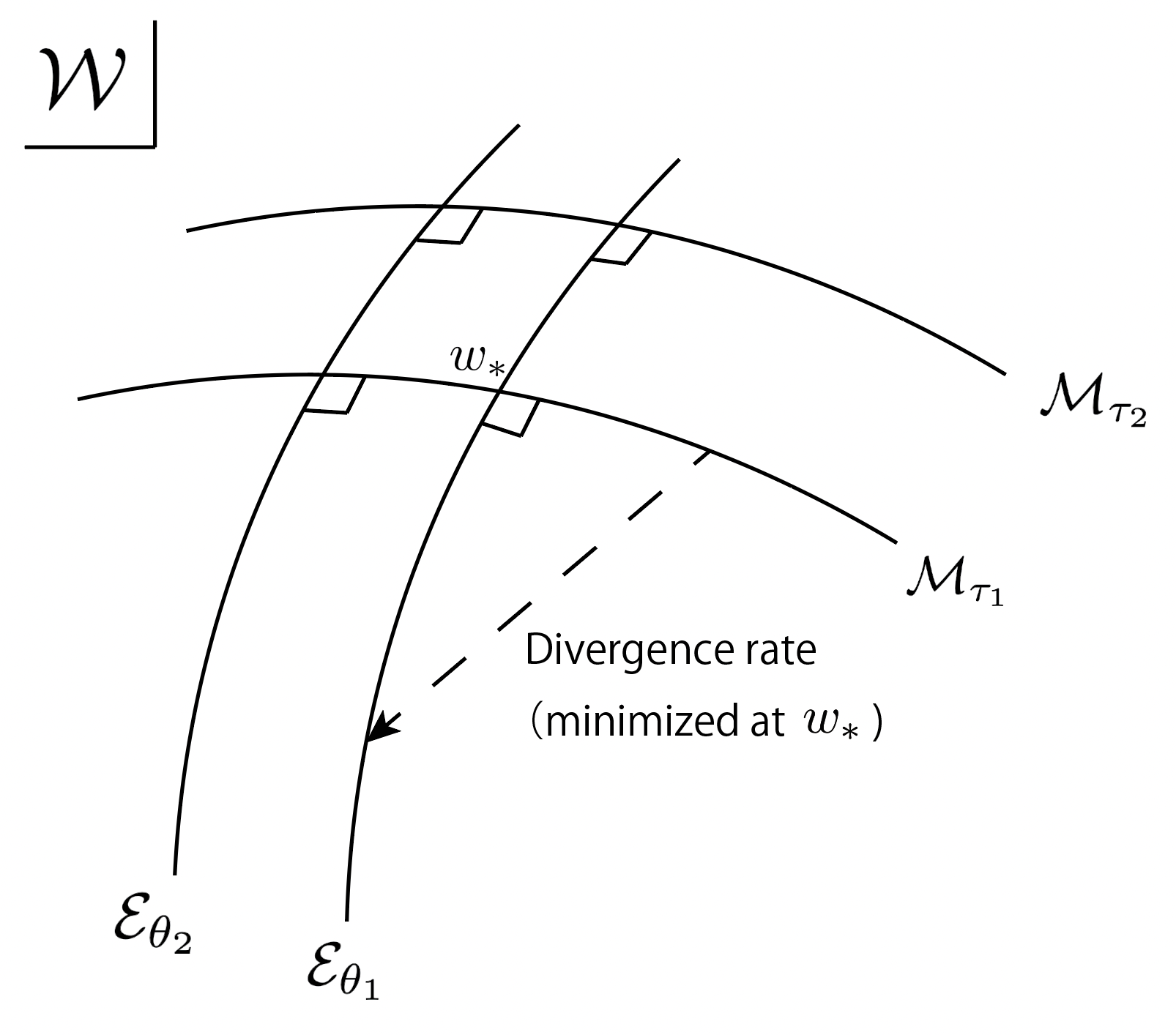}
    \caption{Illustration of Proposition~\ref{prop:pythagorean}.}
    \label{fig:space_ar1}
\end{figure}

\subsubsection{Fisher information}
Proposition~\ref{prop:pythagorean} displays the global orthogonal structure of the space of minimum information Markov models. Contrarily, to explore the local structure of the space of high-dimensional probability distribution, considering the Fisher information matrix is a standard approach. For the minimum information Markov kernel $p$ with the stationary distribution $r$, the Fisher information matrix $G(\xi=\{\theta,r\})$ is calculated as 
$$G(\xi) = -\mathbb{E}[\nabla^2 \log{p_{\xi}(x_t|x_{t-1:t-d})}] = -\mathbb{E}[\nabla^2 (\kappa_{\xi}(x_{t:t-d+1})- \kappa_{\xi}(x_{t-1:t-d}) -\delta_{\xi}(x_t))] = \mathbb{E}[\nabla^2\delta_{\xi}(x_t)].$$
The last equality follows from the stationarity of the process. This is similar to the Fisher information for the exponential family of finite-state first-order Markov chains studied in Nagaoka~\cite{nagaoka2005exponentialfamilymarkovchains}, 
$$\log{p_\theta(x_t|x_{t-1})} = C(x_t,x_{t-1}) + \bm{\theta}^\top \bm{h}(x_t,x_{t-1}) + \kappa_\theta(x_t) - \kappa_\theta(x_{t-1}) - \psi(\theta);$$
$$G(\theta) = \nabla_\theta^2 \psi(\theta),$$ 
but differs in that $\delta$ is a function of $x_t$. 

Furthermore, we show the orthogonality in the case $d=1$ as an example.
\begin{proposition}[Fisher information of the first-order minimum information Markov model] \label{prop:fim}
Assume the differentiation with respect to the parameters and integration over the states can be interchanged.
The Fisher information of the first-order minimum information Markov model
$$w(y|x) = \exp{(\theta h(x,y) + \kappa(y)-\kappa(x)-\delta(y)})$$
$$p_w(x) = r(x; \nu)$$
is
\[
\begin{cases}
    g_{\theta\theta} &= \int r(y) \frac{\partial^2}{\partial\theta^2}\delta(y) dy = \frac{\partial^2}{\partial\theta^2}\int r(y) \delta(y) dy, \\
    g_{\theta\nu} &= \int r(y) \frac{\partial^2}{\partial\theta\partial\nu}\delta(y) dy = 0,\\
    g_{\nu\nu} &= \int r(y) \frac{\partial^2}{\partial\nu^2}\delta(y) dy = -\int \frac{\partial}{\partial\nu}r(y) \frac{\partial}{\partial\nu}\delta(y) dy.
\end{cases}
\]
Especially, the parameters $\theta$ and $\nu$ are orthogonal.
\end{proposition}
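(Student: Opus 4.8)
The plan is to take as given the master identity $G(\xi)=\int r(y)\,\nabla^2\delta(y)\,dy$ derived just above --- the Hessian of $\theta h$ vanishes by linearity in $\theta$, while the Hessians of $\kappa(x_t)$ and $\kappa(x_{t-1})$ cancel in expectation by stationarity --- and to read off its three blocks. Writing subscripts for partial derivatives throughout (so $\delta_\nu=\partial_\nu\delta$, $\delta_{\theta\nu}=\partial_\theta\partial_\nu\delta$, $r_\nu=\partial_\nu r$, and likewise for $\kappa$), this immediately yields $g_{\theta\theta}=\int r\,\delta_{\theta\theta}\,dy$, $g_{\theta\nu}=\int r\,\delta_{\theta\nu}\,dy$ and $g_{\nu\nu}=\int r\,\delta_{\nu\nu}\,dy$. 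Since $r=r(\,\cdot\,;\nu)$ is free of $\theta$, the assumed interchangeability of differentiation and integration lets me pull $\partial_\theta$ outside the first, giving $g_{\theta\theta}=\partial_\theta^2\int r\,\delta\,dy$, which is the first stated identity.

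The heart of the argument is a single auxiliary identity, $\int r(y)\,\delta_\nu(y)\,dy=0$. To obtain it I would start from the normalization $\int w(y|x)\,dy=1$, which holds identically in all parameters, and differentiate with respect to $\nu$; under the interchange hypothesis this shows the conditional score $\int w(y|x)\,\partial_\nu\log w(y|x)\,dy$ vanishes for every $x$. It is worth stressing that this remains true even though $\nu$ governs the marginal, precisely because the conditional kernel $w(\,\cdot\,|x)$ stays normalized for every value of $\nu$. Integrating against $r(x)\,dx$ then gives $\mathbb{E}[\partial_\nu\log w]=0$ under the stationary joint law, where $\partial_\nu\log w=\kappa_\nu(y)-\kappa_\nu(x)-\delta_\nu(y)$ because the term $\theta h(x,y)$ carries no $\nu$-dependence. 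As $x_{t-1}$ and $x_t$ share the marginal $r$ by stationarity, $\mathbb{E}[\kappa_\nu(y)]$ and $\mathbb{E}[\kappa_\nu(x)]$ coincide and cancel, leaving exactly $\int r(y)\,\delta_\nu(y)\,dy=0$.

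With this identity the two remaining claims are immediate. For the orthogonality, since $r$ does not involve $\theta$ I would differentiate $\int r\,\delta_\nu\,dy=0$ in $\theta$ to get $g_{\theta\nu}=\int r\,\delta_{\theta\nu}\,dy=\partial_\theta\!\int r\,\delta_\nu\,dy=0$. For the diagonal block, differentiating the same identity in $\nu$ and applying the product rule gives $\int r_\nu\,\delta_\nu\,dy+\int r\,\delta_{\nu\nu}\,dy=0$, that is $g_{\nu\nu}=\int r\,\delta_{\nu\nu}\,dy=-\int r_\nu\,\delta_\nu\,dy$, as claimed.

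I expect the main difficulty to be conceptual rather than computational: the proof hinges on isolating the invariant $\int r\,\delta_\nu\,dy=0$ and recognizing that it springs from normalization together with stationarity, not from any closed form for $\delta$. A frontal attack trying to evaluate $\delta_{\theta\nu}$ from the coupled fixed-point (Schr\"odinger/Sinkhorn-type) equations that implicitly define $\kappa$ and $\delta$ would be intractable. Beyond this, the only delicate point is the regularity permitting differentiation under the integral sign and the exchange of $\partial_\theta,\partial_\nu$ with $\int(\cdot)\,dy$ --- exactly the hypothesis granted in the statement.
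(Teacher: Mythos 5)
Your proof is correct and takes essentially the same route as the paper: the paper derives the master identity $G(\xi)=\mathbb{E}[\nabla^2\delta_{\xi}(x_t)]$ immediately before the proposition (linearity in $\theta$ plus cancellation of the $\kappa$ terms by stationarity) and then reads off the three blocks, with the chained equalities $\int r\,\partial_\theta\partial_\nu\delta\,dy=0$ and $\int r\,\partial_\nu^2\delta\,dy=-\int \partial_\nu r\,\partial_\nu\delta\,dy$ resting on exactly the normalization-plus-stationarity identity $\int r(y)\,\partial_\nu\delta(y)\,dy=0$ that you isolate. Your write-up simply makes explicit this auxiliary score identity, which the paper leaves implicit in the statement.
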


\subsection{Gaussian autoregressive models}

As notable examples, we discuss the autoregressive (AR) model and the vector autoregressive (VAR) model. Specifically, we show that when the dependence function has a multiplicative form and the stationary distribution is Gaussian, the minimum information Markov model exists uniquely and is equivalent to an AR model. A similar result holds for VAR models.

For the identifiability of the minimum information Markov model, we first define the following equivalence relationship.

\begin{definition} \label{def:equivalence}
Suppose $d\geq 1$. Let $\mathcal{L}$ be the linear space consisting of all functions that can be written in the form:
\[
\kappa(x_{d:2}) - \kappa(x_{d-1:1}) - \delta(x_d)
\]
where \( \kappa \) and \( \delta \) are arbitrary functions (with appropriate domains).
Then, $f \sim g$ if and only $f-g \in \mathcal{L}$.
\end{definition}

Under this notion, it is clear that Gaussian AR/VAR models are included in the minimum information Markov model. In other words, the classical parameterization of AR/VAR models, given by the coefficients and noise variance, can be replaced by the parameterization of the minimum information Markov model, given by the dependence parameter and the variance of the stationary distribution.
\begin{example}[AR(1) model] \label{example:ar1}
 Consider $\mathcal{X}=\mathbb{R}$.
 Define a Markov kernel $w:\mathbb{R}^2\to\mathbb{R}_+$ by
 \[
  w(y|x) = \frac{1}{\sqrt{2\pi\sigma^2}}\exp\left(-\frac{(y-\phi x)^2}{2\sigma^2}\right),\quad (x,y)\in\mathbb{R}^2,
 \]
 that satisfies $\int_{-\infty}^\infty w(y|x)dy=1$,
 where $\phi\in\mathbb{R}$ is called the AR coefficient and $\sigma^2>0$ is the noise variance.
 It is widely known that there exists a unique stationary distribution if and only if $\phi\in(-1,1)$.
 The model is represented in the form of a minimum information Markov kernel as
 \begin{align*}
  w(y|x)
  &= \exp\left(\theta xy + \kappa(y) - \kappa(x) - \delta(y)
  \right),
  \\
  p_w(x)
  &= (2\pi\tau^2)^{-1/2}\exp\left(-\frac{x^2}{2\tau^2}
  \right),
 \end{align*}
 where $\theta=\phi/\sigma^2$, $\kappa(y)=\phi^2y^2/(2\sigma^2)$, $\delta(y)=(1+\phi^2)y^2/(2\sigma^2)+(1/2)\log(2\pi\sigma^2)$ and $\tau^2=\sigma^2/(1-\phi^2)$.
 We have an additional parameter $\tau^2>0$ in the stationary distribution.
 It is shown that the parameter $(\phi,\sigma^2)\in(-1,1)\times\mathbb{R}_+$ corresponds one-to-one with
 the new parameter $(\theta,\tau^2)\in\mathbb{R}\times\mathbb{R}_+$.
 Indeed, the inverse transform is expressed as 
 \[
  \sigma^2 = \frac{2\tau^2}{1+\sqrt{1+4\theta^2\tau^4}},
  \quad \phi = \frac{2\theta\tau^2}{1+\sqrt{1+4\theta^2\tau^4}}.
 \]

Based on Proposition~\ref{prop:fim}, the Fisher information matrix $G(\xi) = \begin{pmatrix}
    g_{\theta}&g_{\theta,\tau^2}\\
    g_{\theta,\tau^2} & g_{\tau^2}
\end{pmatrix}$ becomes
\begin{equation}\label{eq:fim-ar1}
\begin{cases}
g_{\theta}(\xi) &=  \mathbb{E}[\frac{\partial^2}{\partial\theta^2}\delta(x)] = \frac{2\tau^4}{\sqrt{1+4\theta^2\tau^4} (1+\sqrt{1+4\theta^2\tau^4})}\\
g_{\tau^2}(\xi) &= \mathbb{E}[\frac{\partial^2}{\partial(\tau^2)^2}\delta(x)] = \frac{1}{2\tau^4\sqrt{1+4\theta^2\tau^4}}\\
g_{\theta,\tau^2}(\xi) &= \mathbb{E}[\frac{\partial^2}{\partial\theta \partial \tau^2}\delta(x)] = 0
\end{cases}
\end{equation}
which is depicted in Figure~\ref{fig:fim-ar1}. 

\begin{figure}[t]
    \centering
    \includegraphics[width=0.48\linewidth]{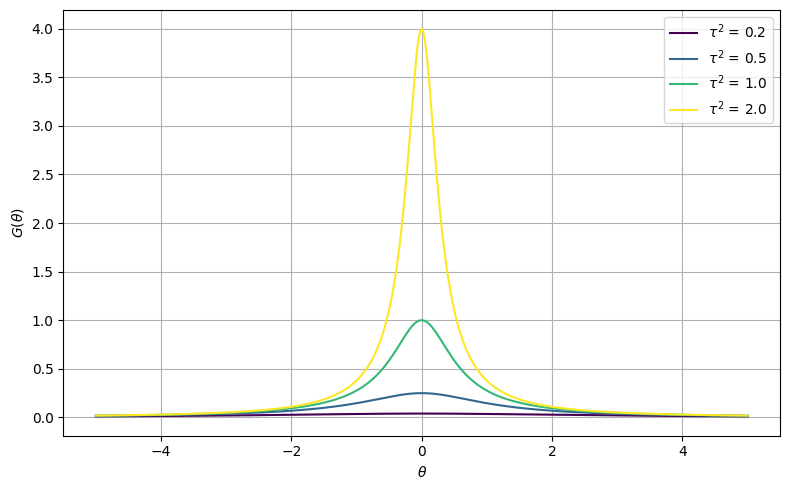}
    \includegraphics[width=0.48\linewidth]{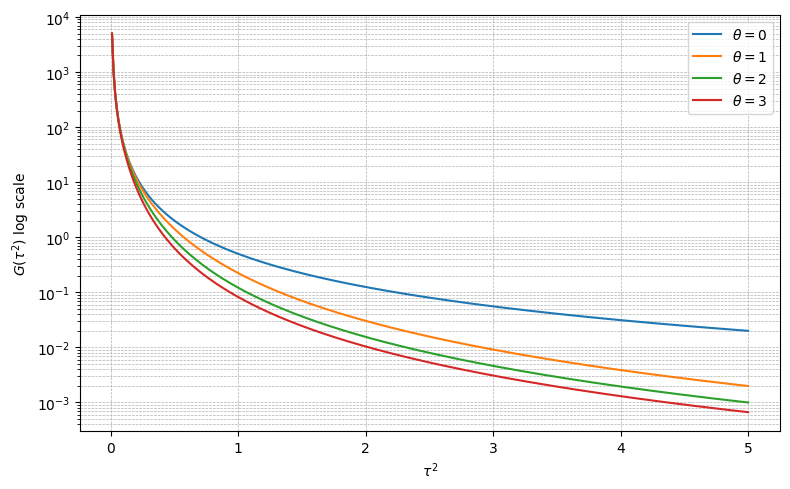}
    \caption{Fisher information of the minimum information model equivalent to the Gaussian AR(1) model (Eq.~\eqref{eq:fim-ar1}).}
    \label{fig:fim-ar1}
\end{figure}

\begin{remark}
    This Markov kernel is also represented as an exponential family studied by Nagaoka~\cite{nagaoka2005exponentialfamilymarkovchains}:
\[
w_\theta(y|x) = \exp\left(\theta_1xy + \theta_2y^2 + \kappa_\theta(y)-\kappa_\theta(x)-\psi_\theta\right),
\]
where $\theta_1=\phi/\sigma^2$, $\theta_2=-(1+\phi^2)/(2\sigma^2)$, $\kappa_\theta(y)=\phi^2 y^2/(2\sigma^2)$ and $\psi_\theta=(1/2)\log(2\pi\sigma^2)$.
Their inverse transforms are given by $\phi = \frac{-\theta_2-\sqrt{\theta_2^2-\theta_1^2}}{\theta_1}, \sigma^2 = \frac{-\theta_2-\sqrt{\theta_2^2-\theta_1^2}}{\theta_1^2}$.
However, the domain of the natural parameter $\theta=(\theta_1,\theta_2)$ is not the whole space:
\begin{align*}
\Theta = \{(\theta_1,\theta_2)\in\mathbb{R}^2\mid |\theta_1|<|\theta_2|, \theta_2<0\}
\end{align*}
due to the restrictions $\sigma^2>0$ and $\phi\in(-1,1)$.
\end{remark}
\end{example}

Similar properties do hold for higher-order and higher-dimensional AR models.

\begin{example}[AR(2) model] \label{example:ar2}
 As in the preceding example, consider $\mathcal{X}=\mathbb{R}$ and define a Markov kernel $w:\mathbb{R}^3\to\mathbb{R}_+$ by
 \[
  w(x_3|x_1,x_2) = (2\pi\sigma^2)^{-1/2}\exp\left(-\frac{1}{2\sigma^2}(x_3-\phi_1 x_2-\phi_2x_1)^2\right),
 \]
 where $(\phi_1,\phi_2)\in\mathbb{R}^2$ and $\sigma^2>0$.
 There exists a unique stationary distribution of $w$ if and only if the two roots of the quadratic equation
 $1-\phi_1 z - \phi_2 z^2 = 0$ are outside the unit circle.
 It is known that there exists a unique stationary distribution of $w$
 if and only if $(\phi_1,\phi_2)\in S$, where
 \[
  S = \left\{
  (\phi_1,\phi_2)\mid 1+\phi_2>0,\ 1-\phi_1-\phi_2>0,\ 1+\phi_1-\phi_2>0
  \right\}
 \]
(see e.g., Marmol~\cite{marmol1995stationarity}).

Based on Definition~\ref{def:equivalence}, the AR(2) model is represented in the form of a minimum information Markov kernel as
 \begin{align*}
 w(x_3|x_1,x_2) &= \exp\left(\theta_1 x_2x_3+ \theta_2x_1x_3 + \kappa(x_2,x_3)-\kappa(x_1,x_2)-\delta(x_3)\right),
 \\
 p_w(x) &= (2\pi\tau^2)^{-1/2}\exp\left(-\frac{x^2}{2\tau^2}\right).
 \end{align*}
where 
\begin{align*}
 \kappa(x_2,x_3) &= \frac{(\phi_1^2+\phi_2^2)x_3^2+2\phi_1\phi_2x_2x_3+\phi_2^2x_2^2}{2\sigma^2},\\
\delta(x_3) &= \frac{1}{2\sigma^2}(1+\phi_1^2+\phi_2^2)x_3^2 + \frac{1}{2}\log(2\pi\sigma^2),\\
\theta_1&=\phi_1(1-\phi_2)/\sigma^2,\\
\theta_2&=\phi_2/\sigma^2,\\
\tau^2 &= \frac{(1-\phi_2)\sigma^2}{(1+\phi_2)(1-\phi_1-\phi_2)(1+\phi_1-\phi_2)}.
\end{align*}
It is shown that the parameter $(\phi_1,\phi_2,\sigma^2)\in S\times\mathbb{R}_+$ corresponds one-to-one with $(\theta_1,\theta_2,\tau^2)\in\mathbb{R}^2\times\mathbb{R}_+$. 
Clearly, $(\theta_1,\theta_2,\tau^2)$ is uniquely obtained from $(\phi_1,\phi_2,\sigma^2)$, however the opposite is not trivial. 

First, $\phi_1$ and $\phi_2$ can be expressed as 
$$
\begin{cases}
  \phi_1 = \frac{\theta_1 \sigma^2}{1-\theta_2\sigma^2}  \\
  \phi_2 = \theta_2 \sigma^2
\end{cases}
$$
and then $\tau^2$ is expressed as the function of $\sigma^2$ as 
\begin{align*}
    \tau^2(\sigma^2) &= \frac{(1-\theta_2 \sigma^2)\sigma^2}{(1+\theta_2 \sigma^2)(1-\frac{\theta_1 \sigma^2}{1-\theta_2\sigma^2} - \theta_2 \sigma^2)(1 + \frac{\theta_1 \sigma^2}{1-\theta_2\sigma^2}-\theta_2\sigma^2)} \\
    &= \frac{(1-\theta_2 \sigma^2)^3 \sigma^2}{(1+\theta_2\sigma^2)\{(1-\theta_2\sigma^2)^4 - (\theta_1 \sigma^2)^2 \}}
\end{align*}
\noindent Here, we show the one-to-one correspondence between $(\theta_1, \theta_2, \tau^2)$ and $(\phi_1, \phi_2, \sigma^2)$ via showing the monotonicity of $\tau^2$ with respect to $\sigma^2$. In doing so, we denote $\sigma^2 = t$ and show 
$$g(t) = \frac{1}{\tau^2(t)} = \frac{1-\theta_2^2t^2}{t} - \frac{(1+\theta_2 t)\theta_1^2 t}{(1-\theta_2 t)^3}$$
is monotone with respect to $t$ such that $(\phi_1, \phi_2) \in S$. The derivative of $g$ is 
\begin{align*}
    g'(t) &= - \frac{1+\theta_2^2t^2}{t^2} - \frac{(1+4\theta_2t + \theta_2^2 t^2)\theta_1^2}{(1-\theta_2t)^4}\\
    &=- \frac{(1+\theta_2^2t^2)(1-\theta_2t)^4 + (1+4\theta_2t + \theta_2^2t^2)\theta_1^2t^2}{t^2(1-\theta_2t)^4}\\
    &< - \frac{(1+\theta_2^2t^2)\theta_1^2t^2 + (1+4\theta_2t + \theta_2^2t^2)\theta_1^2t^2}{t^2(1-\theta_2t)^4}\\
    &=  - \frac{2\theta_1^2t^2(1+\theta_2t)^2}{t^2(1-\theta_2t)^4}\\
    &< 0.
\end{align*}
\noindent With the monotonicity of $g$, $\tau^2(t) \to 0$ when $t \to 0$, and $\tau^2(t) \to \infty$ when $(\theta_1, \theta_2, S)$ goes to the boundary of $S$, $\tau^2$ and $t = \sigma^2$ are one-to-one.
It is clear that $(\phi_1,\phi_2)=(\frac{\theta_1t}{1-\theta_2t},\theta_2 t)$ approaches the boundary of $S$ as $t$ grows by the following argument. 

Define the margins to the three sides of $S$ by
\[
b_1(t)= 1+\phi_2(t),\qquad
b_2(t)= 1-\phi_1(t)-\phi_2(t),\qquad
b_3(t)=1+\phi_1(t) - \phi_2(t).
\]

\emph{Case 1: $\theta_2<0$.}
Then $b_1(t)=1+\theta_2 t$ is linear and vanishes at $t_*=-1/\theta_2>0$.
Hence the path hits the side $1+\phi_2=0$.

\emph{Case 2: $\theta_2=0$.}
Here $\phi_2 = 0$ and $\phi_1(t)=\theta_1 t$.
If $\theta_1>0$ then $b_2(t)=1-\theta_1 t$ vanishes at $t_*=1/\theta_1$,
so the path hits $1-\phi_1-\phi_2=0$.
If $\theta_1<0$ then $b_3(t)=1+\theta_1 t$ vanishes at $t_*=1/|\theta_1|$,
so it hits $\phi_2-\phi_1=1$.
The degenerate pair $\theta_1=\theta_2=0$ yields the constant point $(0,0)$. 

\emph{Case 3: $\theta_2>0$.}
The domain is $[0,1/\theta_2)$.
As $t\uparrow 1/\theta_2$ we have $\phi_2(t)\uparrow 1$ and
\[
\phi_1(t)=\frac{\theta_1 t}{1-\theta_2 t}\to
\begin{cases}
+\infty & \text{if }\theta_1>0,\\
-\infty & \text{if }\theta_1<0.
\end{cases}
\]
Hence $b_2(t)=1- \phi_1(t)-\phi_2(t)\to -\infty$ if $\theta_1>0$,
while $b_3(t)=1 + \phi_1(t) - \phi_2(t)\to -\infty$ if $\theta_1<0$.
Since $b_2(0)=b_3(0)=1>0$ and both $b_2$ and $b_3$ are continuous on $[0,1/\theta_2)$,
the intermediate value theorem gives $t_*\in(0,1/\theta_2)$ with either $b_2(t_*)=0$ or $b_3(t_*)=0$,
so the path hits either $1 - \phi_1 - \phi_2=0$ or $1 +\phi_1 -\phi_2=0$.
If $\theta_1=0$ then $\phi_1 = 0$ and $b_2(t)=b_3(t)=1-\theta_2 t\downarrow 0$
as $t\uparrow 1/\theta_2$, hence the path approaches the upper boundary.
\end{example}

\begin{example}[AR$(d)$ model] \label{example:ar-d}
    The AR($d$) model is usually defined in the following form:
    $$x_t = \sum_{i=1}^d \phi_i x_{t-i} + \epsilon_t, \epsilon_t \sim \mathcal{N}(0,\sigma^2), $$
    where $\epsilon_t$ is the noise term, and $\phi_i$s are the AR parameters. 
    Similar to the preceding examples, consider $\mathcal{X} = \mathbb{R}$ and define a Markov kernel $w:\mathbb{R}^{d+1} \to \mathbb{R}_+$ by 
    \begin{align*}
    w(x_t|x_{t-1:t-d}) &= (2\pi\sigma^2)^{-1/2}\exp{\left(-\frac{1}{2\sigma^2} (x_t-\sum_{i=1}^d \phi_i x_{t-i})^2 \right)}\\
    &= \exp{\left(\sum_{i=1}^d \{\frac{\phi_i - \sum_{1\leq j < k \leq  d,k-j=i} \phi_j \phi_k}{\sigma^2}\} x_t x_{t-i} + \kappa(x_{t:t-d+1})-\kappa(x_{t-1:t-d}) - \delta(x_{t}) \right)},\\
    p_w(x) &= (2\pi\tau^2)^{-1/2}\exp\left(-\frac{x^2}{2\tau^2}\right).
    \end{align*}
    where $\kappa$ and $\delta$ are certain functions. 
    The corresponding dependence function $\bm{h} : \mathcal{X}^{d+1} \to \mathbb{R}^d$ is $\bm{h}_i(x_{t:t-d}) = x_t x_{t-i}$ for $i=1,\dots, d$. 
    There exists a unique stationary distribution of $w$ if and only if the roots of the equation $1-\phi_1z-\phi_2z^2-\dots-\phi_dz^d=0$ are outside the unit circle. 
    Analogously to the previous examples, $(\phi_1,\dots, \phi_d, \sigma^2)$ is expected to correspond in a one-to-one manner to $(\theta_1, \dots, \theta_d, \tau^2)$.
\end{example}

\begin{example}[VAR(1) model] \label{example:var1}
Formally, consider an infinite state space $\mathcal{X}=\mathbb{R}^p$. The VAR(1) model with $p$ variates is written as 
    $$x_t = A x_{t-1} + \epsilon_t,\ A \in \mathbb{R}^{p\times p}, \epsilon_t \sim \mathcal{N}(\bm{0},\Sigma)$$
    where $\Sigma$ is the positive-semidefinite covariance matrix. 
    The Markov transition kernel and the stationary distribution are written as 
    \begin{align*}
      w(x_t|x_{t-1}) &= \{(2\pi)^p|\Sigma|\}^{-1/2} \exp{\left(-\frac{1}{2}(x_t-Ax_{t-1})^\top \Sigma^{-1}(x_t-Ax_{t-1})\right)}\\
      &= \exp{(x_t^\top \Sigma^{-1}Ax_{t-1} + \kappa(x_t) - \kappa(x_{t-1}) - \delta(x_t))}\\
      p_w(x) &= (2\pi|B|)^{-1/2} \exp{\left(-\frac{1}{2}x^\top B^{-1} x\right)}
    \end{align*}
    where $\kappa(x) = \frac{1}{2}x^\top A^\top \Sigma^{-1} A x, \delta(x) = x^\top A^\top \Sigma^{-1} A x/2 + x^\top \Sigma^{-1}x/2 + (1/2)\log{\{(2\pi)^p |\Sigma|\}}$.
    The notation $\mathrm{vec}(\cdot)$ denotes the vectorization operator, which transforms a matrix $M$ into a column vector $(M_{11},M_{21},\dots,M_{n-1,n},M_{n,n})^\top$.  Based on the definition of the minimum information Markov model, the corresponding dependence function $\bm{h}: \mathcal{X}^2 \to \mathbb{R}^{p^2}$ is 
    $$\bm{h}(x_t,x_{t-1}) = x_{t} \otimes x_{t-1},$$
and the model parameter becomes $\Theta = A^\top \Sigma^{-1}$.
    It is well-known that the VAR(1) model has a stationary distribution if and only if the eigenvalues of $A$ are smaller than 1 in modulus. 
    This constraint, widely known as the stationarity condition, in fact naturally arises in the parametrization based on the minimum information Markov model. Specifically, we show that these two parametrizations have a one-to-one correspondence via the algebraic Riccati equation, a well-studied equation in control theory.

\begin{definition}[Algebraic Ricatti Equation~\cite{lancaster1995algebraic}]
For a $p\times p$ matrix $F$, 
\begin{equation} \label{eq:algebraic-ricatti}
F^{\top }X+XF-XRX+Q=0
\end{equation}
is called the algebraic Riccati equation, where $Q$ and $R$ are positive semidefinite.
\end{definition}
\begin{lemma}[e.g., \cite{mao2008existence,massei2024data}]
 Eq.~\eqref{eq:algebraic-ricatti} has a unique positive semidefinite solution if $F = -\frac{1}{2}I$ satisfies the \textit{stabilizability condition}: 
    $$\{x^\top F x\ :\ \|x\|_2 = 1, x \in \mathbb{R}^p\} \subset \{z \in \mathbb{C} : \mathrm{Re}(z) < 0\}.$$
\end{lemma}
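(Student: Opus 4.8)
The plan is to reduce the hypothesis to the statement that $F$ is Hurwitz and then solve the equation through its Hamiltonian matrix. First I would note that, for real unit vectors $x$, the scalar $x^\top F x$ is real and equals $x^\top\tfrac12(F+F^\top)x$, so the stated condition is precisely $\tfrac12(F+F^\top)\prec O$; applying it to an eigenpair $(\lambda,v)$ of $F$ gives $\mathrm{Re}(\lambda)=v^{*}\tfrac12(F+F^\top)v<0$, hence $F$ is Hurwitz. For $F=-\tfrac12 I$ this holds trivially, since $\tfrac12(F+F^\top)=-\tfrac12 I\prec O$ and every eigenvalue has real part $-\tfrac12$.

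The heart of the argument is to show that the Hamiltonian matrix
$$\mathcal{H}=\begin{pmatrix} F & -R \\ -Q & -F^\top \end{pmatrix}$$
has no eigenvalue on the imaginary axis. Suppose $\mathcal{H}\left(\begin{smallmatrix}u\\v\end{smallmatrix}\right)=i\omega\left(\begin{smallmatrix}u\\v\end{smallmatrix}\right)$ with $\omega\in\mathbb{R}$ and $(u,v)\neq 0$. With $F=-\tfrac12 I$ the block rows read $Rv=-(\tfrac12+i\omega)u$ and $Qu=(\tfrac12-i\omega)v$. Pairing the first with $v^{*}$ gives $v^{*}Rv=-(\tfrac12+i\omega)\,v^{*}u$, and since $u^{*}Qu$ is real, pairing the second with $u^{*}$ and conjugating gives $u^{*}Qu=(\tfrac12+i\omega)\,v^{*}u$. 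Adding these yields $v^{*}Rv+u^{*}Qu=0$; as $R,Q\succeq O$ both terms are nonnegative, so $v^{*}Rv=u^{*}Qu=0$ and therefore $Rv=Qu=0$. Substituting back gives $(\tfrac12+i\omega)u=0$ and $(\tfrac12-i\omega)v=0$, whence $u=v=0$, a contradiction. This is the step where the nonvanishing real part $-\tfrac12$ of the field of values is essential, and it is the main obstacle: were the numerical range of $F$ to touch the imaginary axis, the cancellation would fail.

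Since $\mathcal{H}$ has no imaginary eigenvalue and, by the Hamiltonian symmetry $J\mathcal{H}J^{-1}=-\mathcal{H}^\top$ with $J=\left(\begin{smallmatrix}O&I\\-I&O\end{smallmatrix}\right)$, its spectrum is symmetric about the imaginary axis, exactly $p$ eigenvalues lie in the open left half-plane. Taking a real basis $\left(\begin{smallmatrix}U\\V\end{smallmatrix}\right)$ of the associated stable invariant subspace, the standard Riccati construction (see \cite{lancaster1995algebraic}) shows that $U$ is invertible, that $X=VU^{-1}$ is symmetric and solves \eqref{eq:algebraic-ricatti}, and that the closed-loop matrix $F-RX$ is stable. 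Positive semidefiniteness then follows from the closed-loop Lyapunov identity $(F-RX)^\top X+X(F-RX)=-(Q+XRX)\preceq O$, which, by stability of $F-RX$, forces $X=\int_0^\infty e^{(F-RX)^\top t}(Q+XRX)e^{(F-RX)t}\,\mathrm{d}t\succeq O$.

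For uniqueness I would first show that every positive semidefinite solution is stabilizing. Let $X\succeq O$ solve \eqref{eq:algebraic-ricatti} and let $(\lambda,w)$ be an eigenpair of $F-RX$. The Lyapunov identity above gives $2\,\mathrm{Re}(\lambda)\,w^{*}Xw=-(w^{*}Qw+w^{*}XRXw)\le 0$. If $\mathrm{Re}(\lambda)\ge 0$, the left side is nonnegative while the right side is nonpositive, so both vanish; in particular $w^{*}XRXw=0$, hence $RXw=0$ and $(F-RX)w=Fw=\lambda w$, making $\lambda$ an eigenvalue of $F$ with real part $-\tfrac12<0$, a contradiction. Thus $F-RX$ is stable. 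Given two positive semidefinite solutions $X_1,X_2$, subtracting the equations and setting $\Delta=X_1-X_2$ gives the Sylvester equation $(F-RX_1)^\top\Delta+\Delta(F-RX_2)=O$; because $\sigma(F-RX_1)$ and $\sigma\big(-(F-RX_2)\big)$ lie in opposite open half-planes and are therefore disjoint, the equation admits only $\Delta=O$, so $X_1=X_2$.
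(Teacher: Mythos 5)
The paper itself contains no proof of this lemma: it is quoted from the algebraic-Riccati-equation literature \cite{mao2008existence,massei2024data} and then applied with $R=\Theta^\top B\Theta$, $Q=B$. So your proposal does not parallel an internal argument — it supplies the proof the paper outsources — and it does so correctly, along the standard Hamiltonian/invariant-subspace route of \cite{lancaster1995algebraic}. Your computations check out: the block rows of the eigenvalue equation for $\mathcal{H}$ give $v^{*}Rv+u^{*}Qu=0$, and back-substitution kills imaginary-axis eigenvalues precisely because $\tfrac12\pm i\omega\neq 0$ (this is exactly where $F=-\tfrac12 I$ enters); the identity $(F-RX)^\top X+X(F-RX)=-(Q+XRX)$ together with the Lyapunov integral gives $X\succeq O$; the eigenpair argument shows every positive semidefinite solution is stabilizing; and the difference $\Delta=X_1-X_2$ of two solutions satisfies the Sylvester equation $(F-RX_1)^\top\Delta+\Delta(F-RX_2)=O$ whose coefficient spectra lie in opposite open half-planes, forcing $\Delta=O$. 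The one step you should not leave implicit is the invertibility of $U$: the absence of imaginary eigenvalues of $\mathcal{H}$ alone does not make the stable invariant subspace a graph subspace; the standard construction you cite also assumes stabilizability of the pair $(F,R)$ (equivalently of $(F,R^{1/2})$). Here that hypothesis costs nothing — $F=-\tfrac12 I$ is already Hurwitz, so zero feedback stabilizes — but a sentence saying so is needed for the appeal to the standard theorem to be legitimate. With that addition the proof is complete, and it has the side benefit, relative to the paper's bare citation, of making visible exactly where the negative-definiteness of the numerical range of $F$ is used.
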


\begin{proposition}
Let $\mathcal{S}$ denote the set of all $p\times p$ matrices whose eigenvalues are in modulus strictly less than 1.
Let $\mathrm{S}^{p}_{++}$ denote the set of all $p\times p$ positive definite matrices.
    The parameters of the minimum information Markov model $(\Theta, B) \in \mathbb{R}^{p\times p} \times \mathrm{S}^{p}_{++}$ and the stationary VAR(1) parameters $(A, \Sigma) \in \mathcal{S} \times \mathrm{S}^{p}_{++}$ have one-to-one correspondence.
\end{proposition}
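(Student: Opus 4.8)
The plan is to realize the correspondence as an explicit pair of mutually inverse maps, using the discrete-time Lyapunov (stationarity) equation in one direction and the algebraic Riccati equation \eqref{eq:algebraic-ricatti} in the other. In the forward direction, given $(A,\Sigma)\in\mathcal{S}\times\mathrm{S}^{p}_{++}$, I would set $\Theta=A^\top\Sigma^{-1}$ and take $B$ to be the stationary covariance, i.e. the unique solution of $B=ABA^\top+\Sigma$. Since $A$ is Schur stable ($A\in\mathcal{S}$), the series $B=\sum_{k\ge 0}A^k\Sigma(A^\top)^k$ converges and satisfies $B\succ O$, so $(\Theta,B)\in\mathbb{R}^{p\times p}\times\mathrm{S}^{p}_{++}$ as required.

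For the reverse direction the key algebraic step is to eliminate $A$ through the relation $A=\Sigma\Theta^\top$ (equivalently $A^\top=\Theta\Sigma$, which follows from $\Theta=A^\top\Sigma^{-1}$ and the symmetry of $\Sigma$) and substitute it into the stationarity equation. This yields
$$\Sigma(\Theta^\top B\Theta)\Sigma+\Sigma-B=0,$$
which is precisely \eqref{eq:algebraic-ricatti} in the unknown $X=\Sigma$ with $F=-\tfrac12 I$, $R=\Theta^\top B\Theta\succeq O$, and $Q=B\succ O$. Because $x^\top F x=-\tfrac12<0$ for every unit vector $x$, the stabilizability condition of the preceding lemma holds trivially, so there exists a unique positive semidefinite solution $\Sigma$. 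I would then strengthen this to positive definiteness: if $\Sigma v=0$ for some $v\neq 0$, the displayed equation forces $Bv=0$, contradicting $B\succ O$; hence $\Sigma\succ O$, and we may set $A=\Sigma\Theta^\top$.

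It remains to verify that this $A$ genuinely lies in $\mathcal{S}$. By construction $(A,\Sigma,B)$ satisfy $B-ABA^\top=\Sigma\succ O$. For any eigenpair $A^\top v=\lambda v$ with $v\neq 0$ one has $v^{*}A=\bar\lambda v^{*}$, whence $v^{*}ABA^\top v=|\lambda|^2 v^{*}Bv$, so that $(1-|\lambda|^2)\,v^{*}Bv=v^{*}\Sigma v>0$; since $B\succ O$ gives $v^{*}Bv>0$, this forces $|\lambda|<1$. As the eigenvalues of $A^\top$ coincide with those of $A$, we conclude $A\in\mathcal{S}$. Finally I would check that the two maps are mutually inverse: uniqueness of the Lyapunov solution for Schur-stable $A$ and uniqueness of the positive semidefinite Riccati solution make each composition return its input, using the identities $(\Sigma\Theta^\top)^\top\Sigma^{-1}=\Theta$ and $\Sigma\Theta^\top=A$.

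The main obstacle is not the bookkeeping of the compositions but rather two points: (i) recognizing that the substitution $A=\Sigma\Theta^\top$ transforms the stationarity relation into exactly the Riccati form governed by the preceding lemma, with the convenient choice $F=-\tfrac12 I$ that automatically satisfies stabilizability; and (ii) the two strengthening steps, namely promoting the guaranteed positive semidefinite Riccati solution to a positive definite $\Sigma$, and showing that the resulting $A$ is Schur stable so that it lands in the open set $\mathcal{S}$ rather than on its boundary. The latter is the discrete Lyapunov stability argument sketched above, and it is precisely where the hypothesis $B\succ O$ is essential.
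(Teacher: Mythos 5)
Your proposal is correct and follows essentially the same route as the paper: the forward map via the Lyapunov series $B=\sum_{k\ge 0}A^k\Sigma(A^\top)^k$, and the reverse map by recognizing $B=\Sigma\Theta^\top B\Theta\Sigma+\Sigma$ as the algebraic Riccati equation \eqref{eq:algebraic-ricatti} with $F=-\tfrac{1}{2}I$, $R=\Theta^\top B\Theta$, $Q=B$, whose stabilizability condition holds trivially. In fact your write-up is more complete than the paper's: the paper stops at citing the unique positive semidefinite Riccati solution, whereas you additionally promote $\Sigma$ to positive definiteness and verify, via the discrete Lyapunov eigenvalue argument, that $A=\Sigma\Theta^\top$ is Schur stable --- two steps the paper leaves implicit but which are needed for the inverse map to land in $\mathcal{S}\times\mathrm{S}^{p}_{++}$.
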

\begin{proof}
    The stationary condition of the VAR(1) and the definition of $\Theta$ leads to the following two equations:
    \begin{align} 
    B &= ABA^\top + \Sigma \label{eq:var1-1}\\
    \Theta &= A^\top \Sigma^{-1} \label{eq:var1-2}
    \end{align}

    For given $(A,\Sigma) \in \mathcal{S} \times \mathrm{S}^{p}_{++}$, it is easy to see that $\Theta \in \mathbb{R}^{p\times p}$ and $\mathrm{vec}(B) = (I-A \otimes A)^{-1}\mathrm{vec}(\Sigma)$ exist uniquely. The positive definiteness of $B$ follows from the recursive calculation based on $\rho(A) < 1$ and $\Sigma \in \mathrm{S}^{p}_{++}$:
    $$B = \Sigma + ABA^\top = \Sigma + A(\Sigma + AB A^\top)A^\top = \dots = \sum_{k=0}^\infty A^k \Sigma(A^\top)^k.$$

    Inversely, we show the uniqueness of the solution $(A,\Sigma)$ for the given $(\Theta, B) \in \mathbb{R}^{p\times p} \times \mathrm{S}^{p}_{++}$.  
    Plugging Eq.~\eqref{eq:var1-2} into Eq.~\eqref{eq:var1-1}, the equation becomes
    \begin{equation} \label{eq:var1-3}
    B = \Sigma \Theta^\top B \Theta \Sigma + \Sigma,
    \end{equation}
    which is a special case of the algebraic Ricatti equation with respect to $\Sigma$ (take $F = -\frac{1}{2}I, R = \Theta^\top B \Theta, Q = B$).
\end{proof}

\end{example}

\begin{example}[VAR($d$) model] Again, consider an infinite state space $\mathcal{X} = \mathbb{R}^p$. The VAR$(d)$ model is the extension of the VAR$(1)$ model to the $d$-th order, which is written as 
    $$x_t = \sum_{k=1}^d A_k x_{t-k} + \epsilon_t,\ A_k \in \mathbb{R}^{p\times p}, \epsilon_t \sim \mathcal{N}(\bm{0},\Sigma), $$
    where $\epsilon_t$ is the $p$-dimensional noise term.
    The stationary condition of VAR($d$) model is if all roots of the characteristic equation lie outside the unit circle.
    The Markov kernel and the stationary distribution are given by
    \begin{align*}
      w(x_t|x_{t-1:t-d}) &= \{(2\pi)^p|\Sigma|\}^{-1/2} \exp{\left(-\frac{1}{2}(x_t-\sum_{k=1}^d A_k x_{t-k})^\top \Sigma^{-1}(x_t-\sum_{k=1}^d A_k x_{t-k})\right)}\\
      &= \exp{(\mathrm{vec}(\Theta)^\top (x_t \otimes x_{t-k}) +\kappa(x_{t:t-d+1})-\kappa(x_{t-1:t-d}) - \delta(x_{t}))}\\
    p_w(x) &= (2\pi|B|)^{-1/2} \exp{\left(-\frac{1}{2}x^\top B^{-1} x\right)}
    \end{align*}
    where $\kappa$ and $\delta$ are certain functions. 
    The corresponding dependence function $\bm{h}: \mathcal{X}^{d+1} \to \mathbb{R}^{p^2d}$ is
    $$\bm{h}(x_{t:t-d}) = 
    \begin{pmatrix}
        x_t \otimes x_{t-1}\\
        x_t \otimes x_{t-2}\\
        \vdots \\
        x_t \otimes x_{t-d}\\
    \end{pmatrix}.
    $$
\end{example}

As seen in the part of aforementioned examples, the parameters between the Gaussian AR/VAR models and the corresponding minimum information models are one-to-one.\footnote{Notably, the parameters in the minimum information Markov model are unconstrained. In contrast, the parameters in classical AR/VAR parametrizations are inevitably constrained by the stationarity conditions, and their structure becomes highly complicated when $d\geq 2$.}
These facts suggest the unique existence of the minimum information Markov kernel discussed in Section 2.1. 
Indeed, as illustrated in Examples~\ref{example:ar1}–\ref{example:var1}, the AR(1), AR(2), and VAR(1) cases can each be proven directly. 
Alternatively, this property is guaranteed via the divergence rate minimization problem in Eq.~\eqref{eq:divergence-rate-minimization}, pointing toward a more general framework. For instance, the following statement holds for one-dimensional first order minimum information Markov model. Further extensions to $d \geq 2$ and $p \geq 2$ are discussed in \ref{appendix:var1-and-var2}.

\begin{proposition} \label{prop:ar1}
    Suppose $d=1$ and $p=1$. 
    For the dependence function $h(x,y) = xy$ and the stationary distribution $r(x) = \mathcal{N}(0,\tau^2)$ where $ \tau>0$, the minimum information Markov model \eqref{eq:mininfo-markov-model} exists uniquely. Moreover, it is identical to the Gaussian AR(1) model.
\end{proposition}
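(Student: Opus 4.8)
The plan is to prove existence by exhibiting an explicit kernel and to obtain uniqueness from the orthogonality already established in Proposition~\ref{prop:pythagorean}; the identification with the Gaussian AR(1) model is then immediate. Throughout I would fix notation so that $\mathcal{E}$ is the family of first-order kernels with the prescribed dependence structure $\theta xy$ and $\mathcal{M}$ the kernels whose stationary distribution is $r=\mathcal{N}(0,\tau^2)$, so that a minimum information Markov model with the given data is precisely an element of $\mathcal{M}\cap\mathcal{E}$. The content of the proposition is thus that this intersection is a singleton.

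For existence I would read off the candidate directly from Example~\ref{example:ar1}. Given $(\theta,\tau^2)\in\mathbb{R}\times\mathbb{R}_+$, set $\sigma^2=2\tau^2/(1+\sqrt{1+4\theta^2\tau^4})$ and $\phi=2\theta\tau^2/(1+\sqrt{1+4\theta^2\tau^4})$. The first step is the elementary admissibility check: $\sigma^2>0$ is clear, and since $2|\theta|\tau^2<\sqrt{1+4\theta^2\tau^4}<1+\sqrt{1+4\theta^2\tau^4}$ we obtain $|\phi|<1$. Consequently the Gaussian kernel $w(y|x)=\mathcal{N}(y;\phi x,\sigma^2)$ has the unique stationary distribution $\mathcal{N}(0,\sigma^2/(1-\phi^2))$, which equals $\mathcal{N}(0,\tau^2)=r$ by the forward transform of Example~\ref{example:ar1}, and the computation there exhibits $w$ in the minimum information form $\exp(\theta xy+\kappa(y)-\kappa(x)-\delta(y))$ with the stated quadratic $\kappa,\delta$. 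Hence $\mathcal{M}\cap\mathcal{E}\neq\varnothing$, and by construction the element is a Gaussian AR(1) model.

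For uniqueness I would invoke the argument internal to Proposition~\ref{prop:pythagorean}. Suppose $w_1,w_2\in\mathcal{M}\cap\mathcal{E}$. Both share the stationary distribution $r$ and $w_1\in\mathcal{E}$, so the generalized Pythagorean identity applied with $w=w_1$, $w_*=w_2$, $v=w_1$ gives $D(w_1|w_2)+D(w_2|w_1)=D(w_1|w_1)=0$; nonnegativity of the divergence rate then forces $w_1=w_2$ almost everywhere. Together with existence this shows $\mathcal{M}\cap\mathcal{E}$ is a singleton, so the minimum information Markov model exists uniquely and coincides with the explicitly constructed Gaussian AR(1) kernel, giving the final identification.

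The step I expect to be the real obstacle is making the uniqueness argument watertight over the \emph{unrestricted} class of admissible $\kappa,\delta$: a priori a competitor $w_2$ need not be Gaussian, and the cancellation of the additive terms $\kappa(y)-\kappa(x)-\delta(y)$ in the Pythagorean computation, as well as the very finiteness of $D(w_1|w_2)$ and $D(w_2|w_1)$, depends on the integrals $\int\{w_1(x,y)-w_2(x,y)\}\kappa(\cdot)\,dx\,dy$ converging and on the joint laws sharing the marginals $r$. To close this rigorously I would either impose the integrability and interchange hypotheses already used for the Fisher information in Proposition~\ref{prop:fim}, or argue directly from the defining equations: writing $w(y|x)=\exp(\theta xy+g(y)-\kappa(x))$ with $g=\kappa-\delta$, normalization and stationarity collapse to the relation $\int e^{\theta xy}m(x)\,dx=n(y)$ with $m=r\,e^{-\kappa}$ and $n=r\,e^{-g}$, together with the companion normalization relation; injectivity of this bilateral Laplace transform on the closed system then pins down $m$ and $n$ and forces $w$ to be the Gaussian kernel constructed above.
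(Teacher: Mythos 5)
Your proof is correct, but its existence half follows a genuinely different route from the paper's. You obtain existence by explicitly inverting the parameter map of Example~\ref{example:ar1}: given $(\theta,\tau^2)$ you write down $(\phi,\sigma^2)$ in closed form, check $\sigma^2>0$ and $|\phi|<1$, and verify that the resulting AR(1) kernel lies in $\mathcal{M}\cap\mathcal{E}$; this is a complete, self-contained argument. The paper (\ref{appendix:ar1}) instead proceeds convex-analytically: it first constructs a reference kernel $q\in\mathcal{E}$ (Lemma~\ref{lemma:ar1-markov-kernel}, parametrized by $D$ and $K$ rather than $(\phi,\sigma^2)$), then minimizes the divergence rate $D(p|q)$ over Gaussian joint laws $p=\mathcal{N}(\bm{0},M)$ with both marginals fixed at $\tau^2$ and free off-diagonal entry $m$, and deduces existence and uniqueness of the minimizer from convexity of the objective, closedness of the feasible set, and blow-up of the objective at the boundary. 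Your route is shorter and fully explicit, but it leans on the closed-form invertibility of $(\phi,\sigma^2)\mapsto(\theta,\tau^2)$, which is special to AR(1) --- already for AR(2) the paper needs a monotonicity argument and for VAR(1) a Riccati equation --- whereas the paper's optimization argument is precisely the one that scales to $d\ge 2$ and $p\ge 2$ in \ref{appendix:var1-and-var2}, and it additionally delivers the claim emphasized in the abstract that the model is the optimal solution of the divergence rate minimization problem \eqref{eq:divergence-rate-minimization}. For uniqueness you and the paper ultimately use the same device, the cancellation inside the proof of Proposition~\ref{prop:pythagorean}; you correctly invoke the internal argument rather than the statement itself, whose hypothesis already presupposes unique existence. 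Your flagged worry about integrability of the $\kappa,\delta$ cross terms against a possibly non-Gaussian competitor is legitimate, but the paper's own computation is formal at exactly the same point, so this is not a gap relative to the paper's standard of rigor; and your fallback ``bilateral Laplace transform'' sketch --- which would in any case need more work, since the normalization and stationarity relations form a coupled system whose joint solvability is not settled by injectivity alone --- is not needed for the main argument.
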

\begin{proof}
    See \ref{appendix:ar1}.
\end{proof}

\subsection{Other examples}
Excluding the Gaussian cases, the uniqueness of the minimum information Markov model for an arbitrary dependence function and stationary distribution remains unclear. Formally, however, several existing Markov models can be viewed as instances of the minimum information Markov model, thereby providing a new interpretation of each.

\begin{example}[First-order Gaussian copula Markov model]
The Markov kernel of the first-order Gaussian copula Markov model is defined as 
$$p(x_t|x_{t-1}) = c_{\rho}(G_{t-1}(x_{t-1}), G_t(x_{t}))g_{t}(x_{t})$$
where $c_{\rho}$ is the Gaussian copula density, $G$ and $g$ are the marginal distribution functions and density functions, respectively. 
The corresponding dependence function, which is time-varying here, is 
$$h_t(x_t,x_{t-1}) = \Phi^{-1}(G_t(x_t))\Phi^{-1}(G_{t-1}(x_{t-1}))$$
where $\Phi$ denotes the standard normal distribution function. 
When $G_t = \Phi$ for all $t$, then the model reduces to the AR(1) model $h(x_t, x_{t-1}) = x_tx_{t-1}$. 
In this case, the model parameter becomes $\theta = \frac{\rho}{1-\rho^2}\ (\rho \in [-1,1])$.
\end{example}

\begin{example}[Circular processes]
    A circular AR($d$) process (or CAR($d$) process) is defined directly via the von-Mises distribution as 
    $$X_t | X_{t-1}, X_{t-2}, \dots X_{t-d} \sim \mathrm{vM}(\mu_t, \kappa),$$
    where $\mu_t$ is the mean direction and $\kappa$ is the concentration parameter.  In this case, the state space is a circle $\mathcal{X} = S^1$. The Markov kernel is given by
    $$w(x_t | x_{t-1:t-d}) = (2\pi I_0(\kappa))^{-1}\exp{(\kappa\cos{(x_t-\mu_t)})},$$
    $$\mu_t = \mu + g[\sum_{j=1}^p \alpha_j g^{-1}(x_{t-p} - \mu)],$$
    where $g^{-1}$ is an invertible link function that transforms a circular variable on a real line; $g(\cdot) = 2\mathrm{tan}^{-1}(\cdot)$ for example. This process cannot be expressed within the framework of minimum information Markov model.

    Contrarily, $x$ is defined to be a linked AR process (or LAR) if $g^{-1}(x)$ is an AR process. The dependence function of a linked AR(1) process is simply written as 
    $$h(x_t, x_{t-1}) = g^{-1}(x_t)g^{-1}(x_{t-1}).$$
    
    \noindent These two circular models were proposed in parallel in an early influential paper by Fisher and Lee~\cite{fisher1994time}. 
    The distinction between them becomes clearer in the context of the minimum information Markov model. In each dependence function, the former treats the current value and the previous value asymmetrically, while the latter treats them equally.
    
\end{example}

\begin{example}[Binary process]
    For negative dependence structure lying behind binary sequences, Kanter~\cite{kanter1975autoregression} introduced the binary AR model. 
    A binary AR($d$) process is defined to be the two-state Markov chain $\{X_t\}$ on $\{0,1\}$. The Markov kernel $w:\{0,1\}^2 \to \mathbb{R}_{\geq 0}$ is given by
    $$w(X_t|X_{t-1:t-d}) = \begin{cases}
        l^{-1}(\mu + \sum_{i=1}^d \phi_i X_{t-i}) & (X_t = 1)\\
        1-l^{-1}(\mu + \sum_{i=1}^d \phi_i X_{t-i}) & (X_t = 0)\\
    \end{cases}$$
    where $l$ denotes a link function. Two important cases are the identity link function $l(u) = u$ and the logistic link function $l(u) = \log{(\frac{u}{1-u})}$~\cite{cox2018analysis}. 
    Suppose $d=1$ in the latter case. Then, the model is represented in the form of a minimum information Markov kernel as 
    $$w(x_t|x_{t-1}) = \exp{(\phi_1 x_t x_{t-1} + \kappa(x_t) - \kappa(x_{t-1}) - \delta(x_{t}))}$$
    where $\kappa(x) = \log{(1+e^{\mu + \phi_1 x})}$, $\delta(x) = \kappa(x) - \mu x$.
    In addition, its multivariate extension, which is called binary VAR model, is considered in \cite{agaskar2013alarm} and \cite{shin2019autologistic}, for example.
\end{example}

\begin{example}[Integer-valued AR process]
    See Example 2 of Sei~\cite{sei2024constructingmarkovchainsgiven} for some finite-state Markov chain examples.
\end{example}

\section{Parameter estimation}
In this section, we explore the methods to estimate the model parameter $\bm{\theta}$ in Equation~\ref{eq:joint}.
We suggest an estimation based on conditional likelihood, leveraging the fact that the likelihood function derived from Equation~\ref{eq:joint} is invariant to a certain type of permutation on data.
The advantage of using conditional likelihood is that it does not require the information about marginal distributions $\kappa$ and $\delta$, which are usually intractable analytically. On the other hand, Besag's likelihood~\cite{besag1975statistical} is also considered for comparison.

\subsection{Maximum conditional likelihood estimation}

For simplicity, we consider stationary $d$-th-order univariate minimum information Markov model to demonstrate the maximum conditional likelihood estimation, although the extension to multivariate case is straightforward. 
Suppose we have the observation $(x_1, \dots, x_n)$.
It turns out that the likelihood function 
$$L(\bm{\theta}) = L(\bm{\theta},x) = \prod_{i=1}^n p(x_t|x_{t-1:d}) \simeq \exp{\left(\sum_{t=d+1}^n \bm{\theta}^\top \bm{h}(x_{t:t-d}) + \kappa(x_{n:n-d+1}) - \kappa(x_{d:1}) - \sum_{t=d+1}^n \delta(x_t) \right)}$$
allows us to ignore the unknown normalizing functions $\kappa$ and $\delta$ when we consider the following permutation. 
Consider a permutation of $(x_1, \dots, x_n)$ where the first $d$ elements $(x_1,\dots,x_d)$ and the last $d$ elements $(x_{n-d+1},\dots, x_n)$ are fixed but the rest follow an $(n-2d)$-th-order permutation. Let $\Pi$ denote such a set. 
Then, based on its definition, the terms other than the dependence term $\bm{\theta}^\top \bm{h}(x_{t:t-d})$ remain unchanged under this permutation and can therefore be ignored by leveraging conditional inference on it. 
Note that $\Pi$ is different from the symmetric group of degree $n$.
Now, we formulate the estimator by conditional inference.

\begin{lemma}
    The likelihood function is decomposed as
    $$L(\bm{\theta},x) = f(\pi=\mathrm{id}|\bm{\theta}) \sum_{\tilde{\pi}\in \Pi} L(\bm{\theta},\tilde{\pi}\circ x),$$
    \begin{equation}\label{eq:cl}
    f(\pi|\bm{\theta}) = \frac{\tilde{L}(\bm{\theta}, \pi\circ x)}{\sum_{\tilde{\pi} \in \Pi} \tilde{L}(\bm{\theta}, \tilde{\pi} \circ x)}
    \end{equation}
 where $\tilde{L}(\bm{\theta},x) = \exp{(\sum_{t=d+1}^n \bm{\theta}^\top \bm{h}(x_{t:t-d})})$.
    Here, $f(\pi|\bm{\theta})$ is the conditional likelihood given a set $\{x_1, \dots, x_n\}$ without its ordering.
\end{lemma}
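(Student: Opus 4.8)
The plan is to isolate the analytically intractable part of the likelihood and exploit its invariance under the permutation set $\Pi$. Starting from \eqref{eq:joint}, I would factor the likelihood as $L(\bm{\theta},x) = \tilde{L}(\bm{\theta},x)\,R(x)$, where $\tilde{L}(\bm{\theta},x) = \exp(\sum_{t=d+1}^n \bm{\theta}^\top \bm{h}(x_{t:t-d}))$ carries all of the dependence, and the remainder
$$R(x) = \exp\left(\kappa(x_{n:n-d+1}) - \kappa(x_{d:1}) - \sum_{t=d+1}^n \delta(x_t)\right)$$
collects exactly the terms built from the unknown normalizing functions $\kappa$ and $\delta$.

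The key step is to show that $R(x)$ is constant over the orbit $\{\tilde{\pi}\circ x : \tilde{\pi}\in\Pi\}$. I would check each term in the exponent of $R$ against the definition of $\Pi$: since every $\tilde{\pi}\in\Pi$ fixes the first $d$ coordinates $(x_1,\dots,x_d)$ and the last $d$ coordinates $(x_{n-d+1},\dots,x_n)$, the boundary terms $\kappa(x_{d:1})$ and $\kappa(x_{n:n-d+1})$ are literally unchanged. For $\sum_{t=d+1}^n \delta(x_t)$, the point is that $\tilde{\pi}$ only reorders the interior positions $d+1,\dots,n-d$ among themselves while fixing the tail, so the multiset of summand values over $t=d+1,\dots,n$ is preserved and the sum is invariant. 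This yields $R(\tilde{\pi}\circ x) = R(x) =: R$ for all $\tilde{\pi}\in\Pi$, and hence $L(\bm{\theta},\tilde{\pi}\circ x) = \tilde{L}(\bm{\theta},\tilde{\pi}\circ x)\,R$.

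With this invariance, the decomposition follows by direct substitution. I would pull $R$ out of the sum to get $\sum_{\tilde{\pi}\in\Pi} L(\bm{\theta},\tilde{\pi}\circ x) = R\sum_{\tilde{\pi}\in\Pi}\tilde{L}(\bm{\theta},\tilde{\pi}\circ x)$, multiply by $f(\pi=\mathrm{id}|\bm{\theta})$ from \eqref{eq:cl}, and observe that the normalizing sum $\sum_{\tilde{\pi}\in\Pi}\tilde{L}(\bm{\theta},\tilde{\pi}\circ x)$ cancels, leaving $\tilde{L}(\bm{\theta},x)\,R = L(\bm{\theta},x)$, as claimed. The same cancellation makes clear that $f(\pi|\bm{\theta})$ depends on neither $\kappa$ nor $\delta$ and sums to one over $\Pi$, justifying its reading as the conditional likelihood of the ordering given the unordered data.

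I expect no deep obstacle; the argument is essentially bookkeeping. The one place that genuinely requires care — and where a hasty argument could fail — is the $\Pi$-invariance of $\sum_{t=d+1}^n \delta(x_t)$: one must confirm that $\Pi$ permutes only the interior positions and fixes the tail $(x_{n-d+1},\dots,x_n)$, so that the summation over $t=d+1,\dots,n$ sees precisely the same multiset of values before and after applying $\tilde{\pi}$.
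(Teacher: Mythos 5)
Your proof is correct and follows essentially the same reasoning the paper gives (informally, in the text preceding the lemma): factor out the $\kappa$- and $\delta$-terms, note they are invariant under every $\tilde{\pi}\in\Pi$ since the first and last $d$ coordinates are fixed and the interior is only reordered, and then cancel the normalizing sum. Your explicit check that $\sum_{t=d+1}^n \delta(x_t)$ sees the same multiset of values under $\tilde{\pi}$ is exactly the point the paper leaves implicit, so the argument is complete.
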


\begin{definition}
Given the data $(x_1,\dots,x_n)$, 
the conditional maximum likelihood estimator (CLE) $\hat{\bm{\theta}}_{CLE}$ is defined as a maximizer of the conditional likelihood \eqref{eq:cl}:
\begin{equation}\label{eq:cle}
\hat{\bm{\theta}}_{CLE} = \mathrm{argmax}_\theta \ 
 f(\pi=\mathrm{id}|\bm{\theta}).
\end{equation}

\end{definition}
\noindent The estimator $\hat{\theta}_{CLE}$ is calculated in line with Sei and Yano~\cite{sei2024minimum} using Algorithm~\ref{alg:exchange}, which samples from $f(\pi|\bm{\theta})$ via Metropolis-Hasting sampling and applies the Fisher scoring method.
In practice, the swap between two entries $\tau_{st}$ is convenient instead of $\Pi$ in terms of stability, which does not change the likelihood too much from $f(\pi)$ to $f(\pi \circ \tau_{st})$. In this case, the acceptance ratio $\rho$ can be easily calculated by only considering the swapped data. 
In each iteration in Fisher scoring (Algorithm~\ref{alg:fisher-scoring}) when updating $\bm{\theta}$, it is required to implement the Fisher information matrix fror $f(\pi|\bm{\theta})$:
$$G_{jk}(\bm{\theta}) = \mathbb{E}[\{\sum_{t=d+1}^n \bm{h}_j(({\pi} \circ x)_{t:t-d}) - \mu_j(\bm{\theta})\}\{\sum_{t=d+1}^n \bm{h}_k(({\pi} \circ x)_{t:t-d}) - \mu_k(\bm{\theta})\}],$$
$$\mu_i(\bm{\theta}) = \mathbb{E}[\sum_{t=d+1}^n \bm{h}_i(({\pi} \circ x)_{t:t-d})]$$
where the expectation is taken with respect to $f(\pi|\bm{\theta})$. The empirical approximation of it can be calculated by using the MCMC samples from Algorithm~\ref{alg:exchange}. 

\begin{figure}[!t]
\begin{algorithm}[H]
    \caption{Exchange algorithm (Metropolis-Hasting sampling)}
    \label{alg:exchange}
    \begin{algorithmic}[1] 
    \REQUIRE An initial permutation $\pi(0)$ and the number of samples $L$.\\
    \STATE Step 1: Initialize $\pi \leftarrow \pi(0)$ and $l = 1$.\\
    \STATE Step 2: Select $1 \leq s < t \leq n-1$ uniformly at random. Let $\tau_{st}$ be the transposition between $x_s$ and $x_t$. Compute the conditional likelihood ratio $\rho = \frac{f(\pi=\tau_{st}|\bm{\theta})}{f(\pi=\mathrm{id}|\bm{\theta})}$.\\
    \STATE Step 3: Generate a random number $u$ uniformly distributed on $[0, 1]$. 
    If $u \leq \min(1, \rho)$, then update $\pi$ to $\pi\circ\tau_{st}$.\\
    \STATE Step 4: Let $\pi(l) \leftarrow \pi$ and $l \leftarrow l + 1$. Go to Step 2 if $l \leq L$, and output $(\pi(l))_{l=1}^L$ otherwise.
    \end{algorithmic}
\end{algorithm}

\begin{algorithm}[H]
    \caption{Fisher scoring}
    \label{alg:fisher-scoring}
    \begin{algorithmic}[1] 
    \REQUIRE An initial estimation $\theta_0$.\\
    \STATE Step 1: Initialize $\hat{\theta} \leftarrow \theta_0$.\\
    \WHILE{not converge}
    \STATE $S \leftarrow$ Exchange algorithm($\hat{\theta})$ $\#\ S:L$ samples of permutations
    \STATE Discard the beginning of $S$. \# Burn-in 
    \STATE $H[l] \leftarrow \sum_{t=d+1}^n \bm{h}((S[l]\circ x)_{t:t-d})$ 
    \STATE $\hat{\mu} \leftarrow \sum_{l=1}^L H[l] / |S|$ \ \\ 
    \STATE $\hat{G} \leftarrow (\sum_{jk} (H_j -\hat{\mu}_j)(H_k -\hat{\mu}_k)^\top)/|S|$ \ \# Fisher information matrix
    \STATE $\hat{\theta} \leftarrow \hat{\theta} + \hat{G}^{-1} \hat{\mu}$
    \ENDWHILE
    \end{algorithmic}
\end{algorithm}
\end{figure}

\subsection{Maximum Besag's pseudo likelihood estimation}

As a baseline for comparison, we consider the maximum Besag's pseudo likelihood estimation~\cite{besag1975statistical} (abbreviated as PLE hereafter). When the genuine likelihood function is complicated, Besag's likelihood (or pseudo likelihood) is well-known as an approximation of it which provides a computationally simpler statistical estimation.

Consider $n$ observations $\{x(1),\dots,x(n)\}$. Approximating Equation~\eqref{eq:cle}, the Besag's likelihood is written as 

\begin{align*}
L_{PLE}(\bm{\theta}) = L_{PLE}(\bm{\theta},x) &= \prod_{d<s_1<s_2<n-d} \frac{\tilde{L}(\bm{\theta},\pi\circ x)}{\tilde{L}(\bm{\theta},\pi\circ x) + \tilde{L}(\bm{\theta},\pi\circ\tau_{s_1,s_2}\circ x)}\\
&= \prod_{d<s_1<s_2<n-d} \frac{1}{1 + \exp{(-\bm{\theta}^\top \sum_{t=d+1}^n  \{\bm{h}((\pi\circ x)_{t:t-d})- \bm{h}((\pi\circ\tau_{s_1,s_2}\circ x)_{t:t-d})}\})}
\end{align*}

Note that this likelihood is the same form as the logistic regression without interception and without penalty where the explanatory variable is 
\begin{equation} \label{eq:PLE_X}
X_{(s_1,s_2)} = \sum_{t=d+1}^n  \{\bm{h}((\pi\circ x)_{t:t-d})- \bm{h}((\pi\circ\tau_{s_1,s_2}\circ x)_{t:t-d})\}
\end{equation}
and the response variable is always 1. Therefore, the estimator $\hat{\bm{\theta}}_{PLE} = \mathrm{argmax}_{\bm{\theta}}\ L_{PLE}(\bm{\theta})$ can be obtained by implementing the logistic regressor.

For an optimization method to maximize the Besag's pseudo likelihood $L_{PLE}$, the standard gradient descent method is utilized as a baseline, which we refer to as the naive PLE hereafter. However, the optimization becomes more difficult when the dimension of data increases and more time consuming under large samples. Also, the memory issue arises because storing $X \in \mathbb{R}^{_{(n-2d)}C_2 \times d}$ is required. To alleviate these issues, we apply two subsampling strategies to make it work under practical settings. The comparison between the performance of these two optimization methods will be presented in the following Section 4 using simulation data generated from a variety of pre-specified autoregressive models.

\section{Simulation studies with Gaussian autoregressive processes}

In this section, we examine the workability of the estimation methods for our minimum information Markov model using artificially generated data from various Gaussian autoregressive models.

\subsection{MLE vs MCLE vs PLE}
First, as the simplest example, we consider several stationary Gaussian AR and VAR processes. We compare the proposed estimators, MCLE and PLE, in Table~\ref{tab:performance}. 
In addition, in these cases it is well-known that the maximum likelihood estimators (MLEs), with the initial $d$ observations $x_{d:1}$ treated as fixed and given, are available in closed form via the ordinary least squares. Therefore, we also report the MLE results for reference. However, note that MLE is not always tractable for other models. 

Table~\ref{tab:performance} shows the mean Euclidean norm of the estimation error over 30 repetitions and the estimation time per run in seconds, shown as (error, time) in each cell.
Notably, it is observed that the computation burden of MCLE is high and increases drastically as the sample size or the model dimension (both $d$ and $p$ in Definition 2) increases. Algorithm~\ref{alg:exchange} and \ref{alg:fisher-scoring} suggest that this burden is attributed to two reasons. Firstly, the sampling via exchange algorithm becomes inefficient, especially in high dimensional settings. Figure~\ref{fig:acc_rate} shows the sharp decline in acceptance ratio of the Metropolis Hasting sampling in Algorithm~\ref{alg:exchange}. 
\begin{figure}
    \centering
    \includegraphics[width=0.5\linewidth]{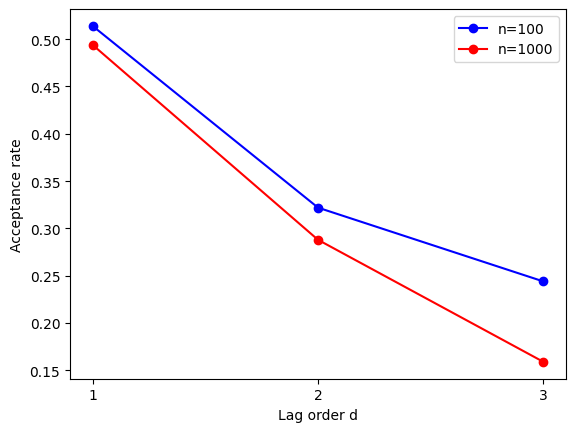}
    \caption{Decline in the acceptance rate of exchange algorithm during MCLE as the order of AR model increases.}
    \label{fig:acc_rate}
\end{figure}
Contrarily, the estimation errors of PLE were comparable to MLE and significantly better than MCLE.
The optimization procedure for the logistic regressor in PLE initialized the weight parameters to zero and employed a learning rate with an inverse time decay schedule. Despite the overall computational time, the optimization accounted for only about 6\%, with the primary bottleneck being the computation of the explanatory variable $X$ in Eq.~\eqref{eq:PLE_X}.

\begin{table}
    \centering
    \begin{subtable}{\linewidth}
    \scalebox{1}{
    \begin{tabular}{|cc|c|ccc|} \hline
    True model & True parameter & Sample size & MLE(ref.) & MCLE & PLE\\\hline
    AR(1) & $\phi=0.5, \sigma^2=0.5, \theta^*=1$ & 100 & 0.208 & 0.310 & 0.211\\
    AR(1) & $\phi=0.5, \sigma^2=0.5, \theta^*=1$ & 1000 & 0.0625 & 0.0774 & 0.0692\\ \hdashline
    AR(2) & $\bm{\phi}=(0.5,0.3)^\top, \sigma^2=0.5, \bm{\theta}^*=(0.7,0.6)^\top$ & 100 & 0.294 & 0.341 & 0.199\\
    AR(2) & $\bm{\phi}=(0.5,0.3)^\top, \sigma^2=0.5, \bm{\theta}^*=(0.7,0.6)^\top$ & 1000 & 0.0869 & -- & 0.0876\\ \hdashline
    AR(3) & $\bm{\phi}=(0.5,0.3,0.1)^\top, \sigma^2=0.5, \bm{\theta}^* = (0.64,0.5,0.2)^\top$ & 100 & 0.315 & 0.472 & 0.279\\
    AR(3) & $\bm{\phi}=(0.5,0.3,0.1)^\top, \sigma^2=0.5, \bm{\theta}^* = (0.64,0.5,0.2)^\top$ & 1000 & 0.120 & -- & 0.196\\  \hdashline
    VAR(1)& $A = \begin{pmatrix}
        0.5&0.1\\
        0.1&0.5
    \end{pmatrix}, \Sigma=0.5I, \Theta^* = \begin{pmatrix}
        1&0.2\\
        0.2&1
    \end{pmatrix}$ & 100 & 0.372 & 0.607 & 0.547\\
    VAR(1)& $A = \begin{pmatrix}
        0.5&0.1\\
        0.1&0.5
    \end{pmatrix}, \Sigma=0.5I, \Theta^* = \begin{pmatrix}
        1&0.2\\
        0.2&1
    \end{pmatrix}$ & 1000 & 0.124 & -- & 0.132\\ \hline
    \end{tabular}}
    \subcaption{Mean Euclidean norm of the estimation error $\|\bm{\hat{\theta}}-\bm{\theta}^*\|_2$ over 30 repetitions ($\bm{\theta}=\mathrm{vec}(\Theta))$. }
    \end{subtable}

    \vspace{1em}
    
    \begin{subtable}{\linewidth}
    \scalebox{1}{
    \begin{tabular}{|cc|c|ccc|} \hline
    True model & True parameter & Sample size & MLE(ref.) & MCLE & PLE\\\hline
    AR(1) & $\phi=0.5, \sigma^2=0.5, \theta^*=1$ & 100 & 0.0145 & 79.32 & 0.320\\
    AR(1) & $\phi=0.5, \sigma^2=0.5, \theta^*=1$ & 1000 & 0.0311 & 830.14 & 260.1\\ \hdashline
    AR(2) & $\bm{\phi}=(0.5,0.3)^\top, \sigma^2=0.5, \bm{\theta}^*=(0.7,0.6)^\top$ & 100 & 0.0954 & 209.29 & 0.475\\
    AR(2) & $\bm{\phi}=(0.5,0.3)^\top, \sigma^2=0.5,\bm{\theta}^*=(0.7,0.6)^\top$ & 1000 & 0.0493 & -- & 392.50\\ \hdashline
    AR(3) & $\bm{\phi}=(0.5,0.3,0.1)^\top, \sigma^2=0.5, \bm{\theta}^* = (0.64,0.5,0.2)^\top$ & 100 & 0.0289 & 353.98 & 12.881\\
    AR(3) & $\bm{\phi}=(0.5,0.3,0.1)^\top, \sigma^2=0.5, \bm{\theta}^* = (0.64,0.5,0.2)^\top$ & 1000 & 0.0743 & -- & 373.49\\ \hdashline
    VAR(1) & $A = \begin{pmatrix}
        0.5 & 0.1\\
        0.1 & 0.5
    \end{pmatrix}, \Sigma=0.5I, \Theta^* = \begin{pmatrix}
        1 & 0.2\\
        0.2 & 1
    \end{pmatrix}$ & 100 & 0.00583 & 208.645 & 0.540\\
    VAR(1) & $A = \begin{pmatrix}
        0.5 & 0.1\\
        0.1 & 0.5
    \end{pmatrix}, \Sigma=0.5I, \Theta^* = \begin{pmatrix}
        1 & 0.2\\
        0.2 & 1
    \end{pmatrix}$ & 1000 & 0.00642 & -- & 519.13\\ \hline
    \end{tabular}}
    \subcaption{Estimation time per run in seconds (bottom). The record ``--'' indicates that the computation per run did not terminate within the time limit of 15 minutes.}
    \end{subtable}
    
    \caption{Comparison in performance of MLE, MCLE, and PLE with synthetic data generated from pre-specified AR/VAR models. In PLE, the logistic regression optimization procedure initialized the weight parameters to zero and used a learning rate with inverse time decay schedule.  }
    \label{tab:performance}
\end{table}

\subsection{Improving PLE}

Regarding the fact that PLE cannot even work well for simple AR process, PLE seems to be no good choice. However, we still argue that PLE is a promising choice in practice. While the estimation performance of PLE is worse than MCLE, the estimation time per run is more preferable. For example in VAR(1) using 1000 samples, MCLE took more than 16 hours just for a single run, whereas PLE achieved the estimation within 9 minutes. It is clear from these experimental results that MCLE is not feasible but PLE is quite acceptable in practice, balancing the tradeoff between the estimation error and time. 
In real scenario, the length of time series data could be significantly larger than 1000. Even PLE could be infeasible by naive calculation due to the exponential growth in computation time with respect to the sample size, as in Figure~\ref{fig:ple_time}. 
Here we propose and examine additional methods that further accelerate PMLE.

\begin{figure}[t]
\centering
    \centering
    \includegraphics[width=0.49\columnwidth]{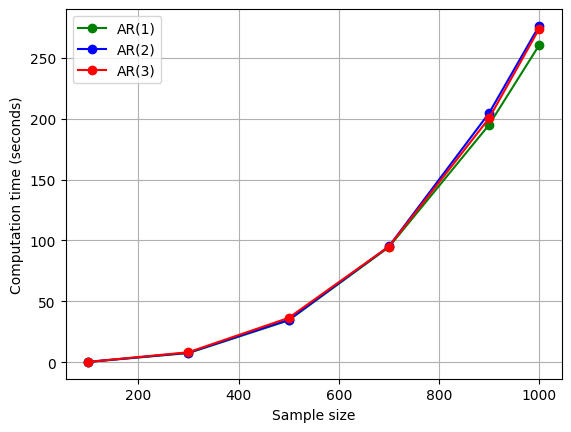}
    \caption{Estimation time of PLE with synthetic data generated from each AR process across different sample sizes.}
    \label{fig:ple_time}
\end{figure}

\paragraph{Bipartition approach}
Firstly, inspired by the conditional KL score minimization in Chen and Sei~\cite{chen2024proper}, we randomly split the observed data into two groups and make a pair of data points. Then, the corresponding terms in $L_{PLE}(\theta)$ are only calculated. In this way, the number of terms is reduced to $\lfloor (n-2d)/2\rfloor \sim O(n)$ from $_{(n-2d)}C_2 \sim O(n^2)$.

\paragraph{Online SGD approach}
Secondly, inspired by prior studies on large sample logistic regression~\cite{wang2018optimal}, we introduce subsampling-based approaches. Specifically, we implement an online stochastic gradient descent method that prevents us from calculating a large data matrix $X_{st} = (h(\pi)-h(\pi\circ\tau_{s,t})) \in \mathbb{R}^{_{(n-2d)}C_2 \times d}$, which costs us long computation time and large memory. Its pseudo code is exhibited in Algorithm~\ref{alg:online-SGD}. 
In the simulation using AR(1) model, Table~\ref{tab:pmle_online-SGD} shows that the online SGD approach performs better with a small learning rate $\eta$ and a large iteration number, which almost matches the performance of the naive PLE presented in Table~\ref{tab:performance}.\\

To compare these approaches in practical settings, we conducted a series of experiments similar to the previous one with large sample size. In the online SGD approach, the learning rate $\eta$ was fixed to 0.01 and the number of iteration was fixed to 10000 to balance the estimation performance and the computational time according to the results shown in Table~\ref{tab:pmle_online-SGD}. Table~\ref{tab:pmle-comparison} summarizes the performance of each method.
Compared to the naive PLE, both the bipartition approach and the online SGD approach show significant acceleration. Specifically, the former outperforms the latter at sample sizes of 1000 and 10000.
The online SGD method demonstrates better acceleration when the sample size reaches 100000, while the estimation error remains empirically comparable.

\begin{algorithm}[t]
    \caption{Online SGD for large sample logistic regression}
    \label{alg:online-SGD}
    \begin{algorithmic}[1] 
    \REQUIRE An initial estimation $\theta_0 \in \mathbb{R}^K$. Learning rate $\eta \in \mathbb{R}_+$.\\
    \WHILE{not converge}
    \STATE Select $d < s_1 < s_2 \leq n-d$ uniformly at random. 
    \STATE $x = \sum \{h(\pi) - h(\pi\circ\tau_{s_1,s_2})\}$
    \STATE $\theta \leftarrow \theta + \eta (1-1/\{1+\exp{(-\theta^\top x)}\})$
    \ENDWHILE
    \end{algorithmic}
\end{algorithm}

\begin{table}[t]
    \centering
    \begin{tabular}{c|cccc}\hline
         &  \#iter=$10^3$& \#iter=$10^4$ & \#iter=$10^5$&\#iter=$10^6$\\ \hline
         $\eta = 0.001$& (0.649, 0.580) & (0.127, 5.662) & (0.0752, 58.447)& (0.0734,576.87)\\
         $\eta = 0.01$& (0.112,0.564) & (0.0892,5.651) & (0.0866,59.142)& (0.0769, 577.18)\\
         $\eta = 0.1$&  (0.229,0.565) & (0.236,6.029) & (0.211,56.63)&(0.189,592.07)\\ \hline
    \end{tabular}
    \caption{The performance of online SGD approach on synthetic data from the AR(1) model. Values in each cell represent the mean Euclidean norm of the estimation error $\|\bm{\hat{\theta}}-\bm{\theta}^*\|_2$ over 30 repetitions and the estimation time per run in seconds. Performance is better with a small learning rate $\eta$ and a large iteration number (\#iter), which almost matches the performance of the naive PLE presented in Table~\ref{tab:performance}. The data used is 1000 samples generated from AR(1) model: $x_{t+1} = 0.5x_t + \epsilon_t,\ \epsilon_t \sim \mathcal{N}(0,0.5)$. }
    \label{tab:pmle_online-SGD}
\end{table}

\begin{table}[t]
    \centering
    \begin{tabular}{|ccc|cc|} \hline
        True model & True parameter & Sample size & Bipartition & Online SGD ($\eta=0.01$, \#iter=$10^4$)\\\hline
        AR(1) & $\phi=0.5, \sigma^2=0.5$ & 1000 & (0.0989,\textbf{0.265})& (\textbf{0.0892},5.651)\\
        AR(1) & $\phi=0.5, \sigma^2=0.5$ & 10000 & (\textbf{0.0327},\textbf{25.587})& (0.0773, 54.214)\\
        AR(1) & $\phi=0.5, \sigma^2=0.5$ & 100000 & (\textbf{0.00901},3059.74)& (0.0620,\textbf{1111.234})\\ \hdashline
        AR(2) & $\bm{\phi}=(0.5,0.3)^\top, \sigma^2=0.5$ & 1000 & (0.311,\textbf{0.291})& (\textbf{0.299},5.693)\\
        AR(2) & $\bm{\phi}=(0.5,0.3)^\top, \sigma^2=0.5$ & 10000 & (\textbf{0.3008},\textbf{25.565})& (0.317,53.963)\\
        AR(2) & $\bm{\phi}=(0.5,0.3)^\top, \sigma^2=0.5$ & 100000 & (\textbf{0.298},3180.14)& (0.306,\textbf{1195.533})\\ \hdashline
        AR(3) & $\bm{\phi}=(0.5,0.3,0.1)^\top, \sigma^2=0.5$ & 1000 & (0.457,\textbf{0.280})&(\textbf{0.382},5.841)\\
        AR(3) & $\bm{\phi}=(0.5,0.3,0.1)^\top, \sigma^2=0.5$ & 10000 & (0.389,\textbf{25.877})&(\textbf{0.381},55.389)\\ 
        AR(3) & $\bm{\phi}=(0.5,0.3,0.1)^\top, \sigma^2=0.5$ & 100000 & (\textbf{0.373},3170.67)&(0.383,\textbf{931.57})\\ \hline 
    \end{tabular}
    \caption{Comparison of the bipartition approach and the online SGD approach for calculating PLE. The better scores in each row are bolded.}
    \label{tab:pmle-comparison}
\end{table}

\section{Real time-series data applications}

Let us review the framework of the minimum information Markov modeling. When you have a stationary time series data, you specify two factors in principle: (i) the stationary distribution and (ii) the dependence function $h(x_{t:t-d})$. The order of Markov kernel $d$ can be taken arbitrarily here. In practice, our proposed method for estimating dependence does not involve the estimation of the marginal (stationary) distribution. Consequently, only the form of the dependence function $h$ needs to be specified. The marginal distributions can, if desired, be estimated separately, for example, using empirical distributions. 
The dependence function can be interpreted as the strength of interaction effect between $x_t$ and $x_{t-d}$ for $d \geq 1$. 
The simplest choice is $d=1$ and $h(x_t,x_{t-1}) = x_t x_{t-1}$, which leads to the same dependence structure with AR(1) model, as seen in Example~\ref{example:ar1}. Differently, you may pick $h(x_t, x_{t-1}) = x_t^a x_{t-1}^b\ (a,b \in \mathbb{N})$ which is similar to step-wise inclusion of basis functions in \cite{bedford2016approximate}, or increase the dimension of output from the dependence function $h$. 
Once $h$ is specified, the coefficients corresponding to each $h$ are estimated by maximizing the conditional likelihood or approximately maximizing the pseudo likelihood, as in Section 3. Due to the unified expression of the minimum information Markov model, which includes various typical models, the estimation procedure can also be conducted in a unified manner regardless of the data or any assumptions. 
This is in contrast with the standard maximum likelihood procedure, where the parametric model is pre-specified and the algorithm for obtaining the maximum likelihood estimator is proposed per model; when the maximum likelihood estimator is intractable, for instance in energy-based models, the alternative estimators like the score matching estimator are studied.

\begin{remark}
For example, the model
$$x = \{x_1,x_2,\dots,x_n\}$$
$$h(x_t,x_{t-1}) = (x_tx_{t-1} , x_t^2 x_{t-1} , x_t x_{t-1}^2)^\top$$
$$\theta = (\theta_1,\theta_2,\theta_3)^\top$$
seems to be equivalent to the restricted VAR(1)-alike model:
$$x' = \{(x_1,x_1^2)^\top, (x_2,x_2^2)^\top,\dots,(x_n,x_n^2)^\top\};$$
$$h(x'_t,x'_{t-1}) = x'_t \otimes x'_{t-1} = (x_tx_{t-1}, x_tx_{t-1}^2, x_t^2x_{t-1}, x_t^2x_{t-1}^2)^\top;$$
$$\theta' = (\theta_1, \theta_3, \theta_2, 0)^\top.$$
However, since the marginal distribution of $x'$ is obviously not the bivariate Gaussian, this model is not exactly the bivariate VAR(1) model, although they share the same dependence function $h$.
\end{remark}

\subsection{Dataset acquisition}

We apply our method on time series analysis on V4 Utah Array Plaid Movie Data~\cite{Smith2020}. 
This dataset contains two data files from individual subjects performing a fixation task. Four types of data were recorded simultaneously on 96 electrode channels.
Specifically, we analyze the local field potentials and the spike trains of one individual, refered to as ``Wi170428''. 
Local field potentials (LFPs) and spike trains capture two complementary aspects of neural activity.
LFPs are low-frequency extracellular signals that reflect the summed synaptic input and subthreshold activity of populations of neurons within a local region. They are continuous-valued signals and typically analyzed in the frequency domain (e.g., using power spectral density), capturing oscillatory dynamics and collective network states.
In contrast, spike trains are point processes that encode the timing of action potentials emitted by individual neurons. They are sparse, high-temporal-resolution representations of neural output, often analyzed in terms of firing rates, inter-spike intervals, or spike-triggered average

\begin{remark}
Since both LFPs and spike trains are important sources of information in neuroscience, many studies have been conducted on each separately.
Some studies have utilized autoregressive models for the analysis of LFPs~\cite{huberdeau2011analysis, hoerzer2010directed}.
In contrast, spike trains are often naturally modeled as inhomogeneous Poisson processes or other point processes, and the use of AR models for spike trains remains limited.
Beyond modeling each signal independently, relating these two types of data, referred to as spike–LFP coupling in some literature, is also of great interest for revealing the mechanisms of neural information processing.
Several studies have proposed models and methods to relate LFP dynamics to spike trains~\cite{gutnisky2010generation, banerjee2012parametric}, including approaches based on state-space models.
In this study, we use such data as an application example for our method; however, we do not aim to compare the performance of different models from perspectives from neuroscience or to claim that our modeling choice is the most appropriate. These aspects are beyond the scope of this work.
\end{remark}

\subsection{A univariate example on local field potential}

Firstly, we apply the minimum information Markov model with four different dependence functions to univariate LFP series, preprocessed by standard scaling, where $N=1000$ (Figure~\ref{fig:LFP}). 
As baselines, we pick the choices that share equivalent dependence structure with the Gaussian AR(1) and AR(2) models:
$$h(x_t,x_{t-1}) = (x_tx_{t-1})$$
and
$$h(x_t,x_{t-1}) = (x_tx_{t-1}, x_tx_{t-2})^\top.$$
By extending the latter, we also employ as the third candidate
$$h(x_t,x_{t-1}) = (x_tx_{t-1}, x_tx_{t-2}, x_tx_{t-1}x_{t-2})^\top.$$
Furthermore, we add
$$h(x_t,x_{t-1}) = (x_tx_{t-1}, x_t^2x_{t-1}, x_tx_{t-1}^2)^\top$$
as the fourth candidate.
The first entry is identical to the AR(1) model, however, the second and third cannot be expressed with it, aiming at capturing further flexible temporal dependence structure. 

\begin{figure}[t]
    \centering
    \includegraphics[width=1\linewidth]{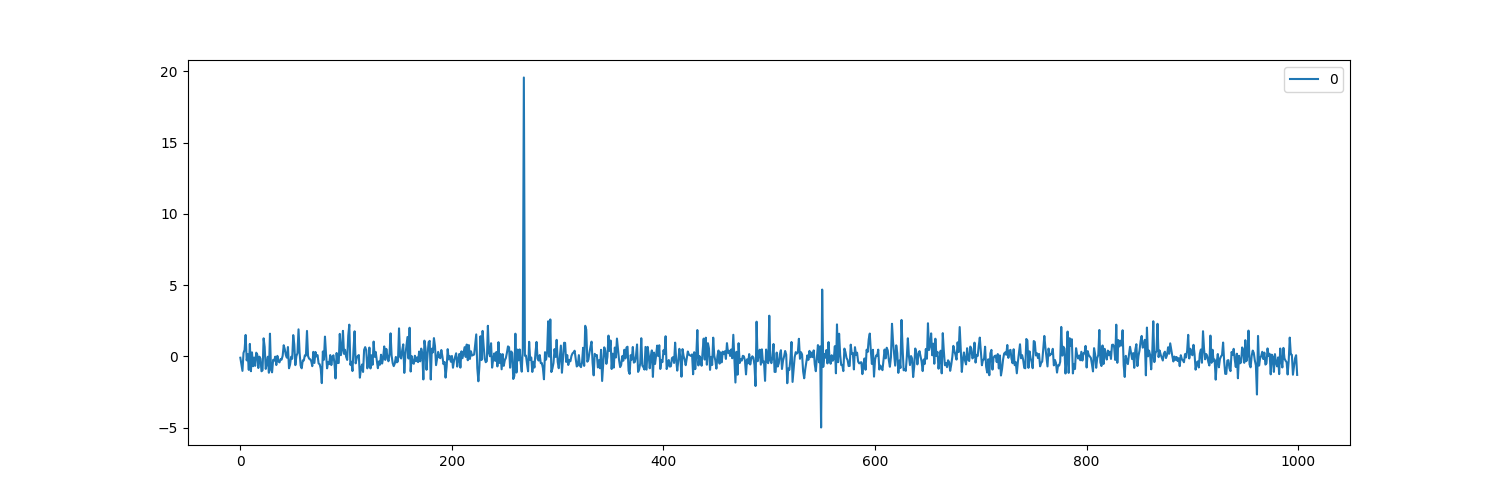}
    \caption{An example of standard scaled LFP; channel = 1.}
    \label{fig:LFP}
\end{figure}


It is also possible to compare different models in line with the usual AIC-based model selection procedure. On the other hand, the pseudolikelihood-based information criterion for Markov random fields was proposed by Ji and Seymour~\cite{ji1996consistent}:
$$\mathrm{AIC} = -2 \log{L_{PLE}}(\bm{\hat{\theta}_{PLE}}) + 2\mathrm{dim}{(\bm{\theta})};$$
$$\mathrm{PIC} = -2 \log{L_{PLE}}(\bm{\hat{\theta}_{PLE}}) + \mathrm{dim}{(\bm{\theta})}\log{_{N-2d}C_2}.$$

Table~\ref{tab:PLE-univariate-LFP} shows the results of PLE performed on a standard-scaled univariate LFP data of length 1000. The comparison on AIC indicates that $h(x_t,x_{t-1}) = x_tx_{t-1}$, which has the same dependence structure with the Gaussian AR(1) model, is the best fit for this data.

\begin{table}
    \centering 
    \begin{tabular}{ccccc} \hline
         $\bm{h}(x_t,x_{t-1})$& $\bm{\hat{\theta}}_{PLE}$ & $\log{L_{PLE}(\bm{\hat{\theta}}_{PLE})}$&AIC&  PIC\\ \hline
         $(x_tx_{t-1})^\top$ & $(-0.0202)^\top$ & -343262.87 & \bf{686527.75}& \bf{686538.85}\\
         $(x_tx_{t-1}, x_t^2x_{t-1}, x_tx_{t-1}^2)^\top$& $(-0.269,  0.802, -0.0643)^\top$ & -520619.75 & 1041245.50 & 1041278.82\\
         $(x_tx_{t-1}, x_tx_{t-2})^\top$ &$(-0.0493,  0.00364)^\top$ & -343679.34 & 687362.68 & 687384.90\\
         $(x_tx_{t-1}, x_tx_{t-2}, x_tx_{t-1}x_{t-2})^\top$& $(-0.0648,0.0119, 0.00316)^\top$&-344425.06 &688856.13  & 688889.44\\ \hline
    \end{tabular}
    \caption{PLE on univariate LFP dataset (channel = 1). The first row is equivalent to Gaussian AR(1). The third row is equivalent to Gaussian AR(2).}
    \label{tab:PLE-univariate-LFP}
\end{table}

\subsection{Cross domain analysis between neuro spike train and local field potential}

Secondly, we demonstrate that our proposed method is applicable to a pair of time series data from different domains.
Figure~\ref{fig:LFP-spike} shows an example of simultaneously recorded LFP and spike trains from V4 Utah dataset. 

\begin{figure}
    \centering
    \includegraphics[width=1\linewidth]{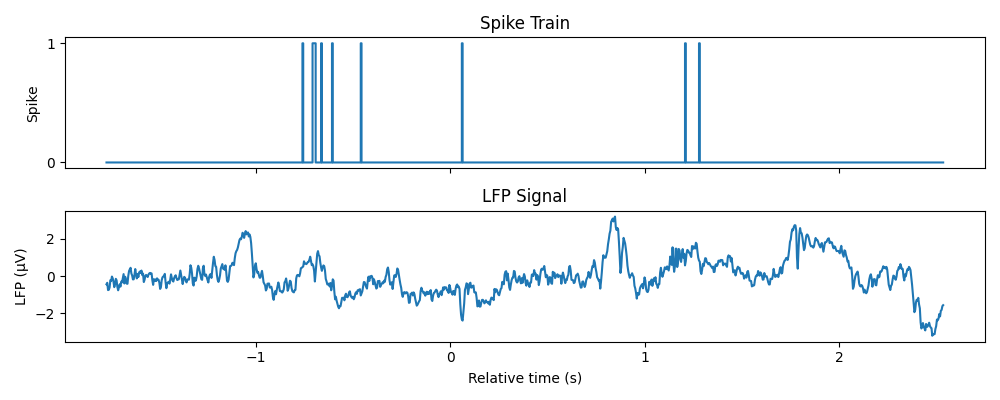}\\
    \includegraphics[width=1\linewidth]{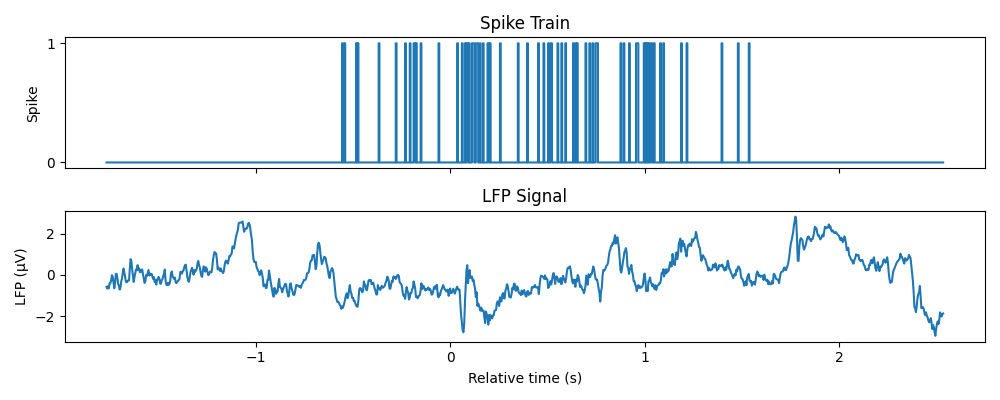}
    \caption{An example of simultaneously recorded LFP after standard scaling and binary spike trains; top: channel = 2, bottom: channel = 75. The length of each data is 1007. The time axis shows the time relative to the stimulation offset at 0.}
    \label{fig:LFP-spike}
\end{figure}

Suppose $x$ is a series of bivariate data points. Without loss of generality, suppose the first entry is the binary variable and the second entry is the real-valued variable, i.e., $x_t = (x_t^{bin}, x_t^{real})^\top$. 
In the usual VAR(1) modeling, the dependence function is modeled as
$$h(x_t,x_{t-1}) = x_t \otimes x_{t-1} = \begin{pmatrix}
    x_{t}^{bin}x_{t-1}^{bin}\\
    x_{t}^{bin}x_{t-1}^{real}\\
    x_{t}^{real}x_{t-1}^{bin}\\
    x_{t}^{real}x_{t-1}^{real}
\end{pmatrix}$$
as seen in Example~\ref{example:var1}.
The first entry indicates the binary AR(1) modeling for $x^{bin}$ and the last entry indicates the Gaussian AR(1) modeling for $x^{real}$. 
To enable capturing more complex dependence structure, we could further extend the dependence function straightforwardly to
$$h(x_t,x_{t-1}) = \begin{pmatrix}
    x_{t}^{bin}x_{t-1}^{bin}\\
    x_{t}^{bin} (x_{t-1}^{real})^{b_1}\\
    (x_{t}^{real})^{a_1}x_{t-1}^{bin}\\
    (x_{t}^{real})^{a_2}(x_{t-1}^{real})^{b_2}
\end{pmatrix},\ a_1, a_2 \in \mathbb{R}, b_1, b_2 \geq 1.$$

Table~\ref{tab:PLE-bivariate-LFP} shows the results of PLE performed on a bivariate LFP-Spike data of length 1007. The comparison of AIC/PIC indicates that the most complex of the four candidates was selected as the best fit for this data, suggesting its potential to capture the complex dependencies underlying the data through the flexible dependence function that the minimum information Markov model can express.

\begin{table}[t]
    \centering 
    \begin{tabular}{cc|ccc} \hline
         $\bm{h}(x_t,x_{t-1})$& $K$ & $\log{L_{PLE}(\bm{\hat{\theta}}_{PLE})}$&AIC&  PIC\\ \hline
         $x_t \otimes x_{t-1}$ & 4 &-87896.88&175801.7& 175846.1\\
         $(x_t \otimes x_{t-1},x_t \otimes x_{t-1}^2)$ & 8&-87541.25 &175098.5 & 175187.2\\
         $(x_t \otimes x_{t-1},x_t^2 \otimes x_{t-1})$ & 8&-87527.17 & 175070.3 & 175159.0\\
         $(x_t \otimes x_{t-1},x_t \otimes x_{t-1}^2, x_t^2 \otimes x_{t-1}, x_t^2 \otimes x_{t-1}^2)$ & 16&-86390.15 & \bf{172812.3} & \bf{172989.2}\\ \hline
    \end{tabular}\\
    \vspace{5mm}
    \begin{tabular}{cc|ccc} \hline
         $\bm{h}(x_t,x_{t-1})$& $K$ & $\log{L_{PLE}(\bm{\hat{\theta}}_{PLE})}$&AIC&  PIC\\ \hline
         $x_t \otimes x_{t-1}$ & 4 & -79582.75& 159173.50& 159217.9\\
         $(x_t \otimes x_{t-1},x_t \otimes x_{t-1}^2)$ & 8&-79456.37&158920.74& 159017.4\\
         $(x_t \otimes x_{t-1},x_t^2 \otimes x_{t-1})$ & 8&-79477.20 & 158962.40 & 159059.1\\
         $(x_t \otimes x_{t-1},x_t \otimes x_{t-1}^2, x_t^2 \otimes x_{t-1}, x_t^2 \otimes x_{t-1}^2)$ & 16&-76010.07 & \bf{152036.14} & \bf{152124.8}\\ \hline
    \end{tabular}
    \caption{PLE on bivariate LFP-spike dataset (top: channel = 2, bottom: channel = 75). The first row is equivalent to Gaussian VAR(1).}
    \label{tab:PLE-bivariate-LFP}
\end{table}

\section{Concluding remarks}

We proposed the minimum information Markov model by defining the dependence function and the stationary distribution. This separation is analogous to the copula modeling where the joint distribution is decomposed into a copula function and marginals. Indeed, in our modeling, the parameter in the dependence function and the stationary distribution are orthogonal with respect to the divergence rate, a KL-divergence for Markov kernels.
Therefore, parameter estimation based on the likelihood can be efficiently performed using MCLE or PLE. For higher order, larger dimension, and greater sample size, PLE is more practical in terms of computational time because it reduces to a standard logistic regression problem, which can be accelerated by various methods, including learning rate adjustment, stochastic gradient descent, and subsampling, as confirmed by our simulation studies.

In this paper, we considered the Gaussian autoregressive models primarily, which are completely tractable in its explicit expression. The theoretical understanding of infinite-state Markov chains is not yet fully developed. For example, the existence of the minimum information Markov model is not guaranteed in general when the dependence and the stationary distribution are both specified. 

Another important question in the minimum-information Markov model concerns the choice of the number and form of the dependence functions. In our experiments, we employed simple low-degree polynomials consisting of a single Kronecker product as an initial approach. In principle, the dependence functions can be complex or nonlinear; however, this often makes their interpretation less intuitive.

\section*{Acknowledgements}

I.S. is supported by RIKEN Junior Research Associate program.
T.S. is supported by JSPS KAKENHI Grant Numbers JP19K11865 and JP21K11781.
ChatGPT (version 5.2) was used solely for English proofreading and stylistic improvements. The authors are responsible for the scientific content of this work.

\appendix

\bibliographystyle{plain}
\bibliography{ref}
\section{Proof of Proposition~\ref{prop:ar1}} \label{appendix:ar1}

We employ the notations from Proposition~\ref{prop:pythagorean} where $p=1$ and $r = \mathcal{N}(0,\tau^2)$. 
The proof is briefly composed of three steps: (i) construct a Markov kernel that belongs to $\mathcal{E}$, (ii) formulate a divergence minimization problem where the optimal solution should lie in $\mathcal{E} \cap \mathcal{M}$, (iii) show this problem is closed convex to guarantee the (unique) existence of its optimal solution. 

We begin by confirming that a Markov kernel with the same dependence structure with Gaussian AR(1) is constructable. 

\begin{lemma} \label{lemma:ar1-markov-kernel}
For $\theta \in \mathbb{R}$ and $D \in\mathbb{R}$ such that $D > |\theta|$, there exists $K \in\mathbb{R}$ (and $C \in\mathbb{R}$) such that
$$p(y|x) = \exp{(\theta xy + Ky^2 - Kx^2 - Dy^2 -C)}$$
is a stationary Markov kernel.
Moreover, its stationary distribution is $\mathcal{N}(0,1/2\sqrt{D^2-\theta^2})$. $K$ and $C$ depend on $D$ and $\theta$. 
\end{lemma}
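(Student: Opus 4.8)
The plan is to prove the lemma by direct construction: I would impose the two defining requirements of a stationary Markov kernel --- proper normalization $\int p(y\mid x)\,dy = 1$ for every $x$, and the existence of an invariant density --- and show that both can be met by an explicit choice of the constants $K$ and $C$, which simultaneously pins down the stationary variance.

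First I would carry out the normalization. The exponent is quadratic in $y$ with leading coefficient $K-D$, so the Gaussian integral over $y$ converges when $K < D$; applying the Gaussian integral formula from the preliminaries gives $\int p(y\mid x)\,dy = \sqrt{\pi/(D-K)}\,\exp\big(-Kx^2 - C + \tfrac{\theta^2 x^2}{4(D-K)}\big)$. Requiring this to equal $1$ for every $x$ splits into two conditions: the coefficient of $x^2$ must vanish, giving the quadratic $4K^2 - 4KD + \theta^2 = 0$, and the remaining constant must vanish, which then fixes $C$. Since $D > |\theta|$ the discriminant $D^2-\theta^2 = (D-|\theta|)(D+|\theta|)$ is strictly positive, so the quadratic has the two distinct real roots $K_\pm = (D \pm \sqrt{D^2-\theta^2})/2$.

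Next I would identify the invariant density with the ansatz $r = \mathcal{N}(0,\tau^2)$ and thereby select the correct root. Substituting into the fixed-point equation $\int p(y\mid x)\,r(x)\,dx = r(y)$ and integrating over $x$ (which converges when $K + 1/(2\tau^2) > 0$), I would match the $y^2$-coefficients of the two Gaussian exponents; because both sides are probability densities in $y$, agreement of this single quadratic coefficient already forces equality. Using the normalization identity $\theta^2 = 4KD - 4K^2$ to cancel terms, the matching collapses to the clean relation $1/(2\tau^2) = D - 2K$. Positivity of the variance then requires $D - 2K > 0$, which selects $K = K_- = (D-\sqrt{D^2-\theta^2})/2$ and discards $K_+$; substituting back yields $\tau^2 = 1/(2\sqrt{D^2-\theta^2})$, exactly the claimed stationary distribution. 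I would close by verifying that $K_-$ meets every convergence requirement, namely $K_- < D$ and $K_- + 1/(2\tau^2) = D - K_- = (D+\sqrt{D^2-\theta^2})/2 > 0$, so all integrals invoked are legitimate, which establishes the existence of the desired $K$ and $C$.

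The step I expect to require the most care --- rather than mere calculation --- is the root selection: both normalization roots give a genuinely normalized kernel, and it is only the demand that the stationary variance be positive that singles out $K_-$. The algebra collapses cleanly precisely because the normalization constraint $\theta^2 = 4KD-4K^2$ feeds into the stationarity equation, so I would state that identity explicitly and deploy it at the right moment. As a consistency check I would confirm the construction against the explicit AR(1) parametrization of Example~\ref{example:ar1}, where $K = \phi^2/(2\sigma^2)$, $D = (1+\phi^2)/(2\sigma^2)$, and $\theta = \phi/\sigma^2$ satisfy all the derived relations and reproduce $\tau^2 = \sigma^2/(1-\phi^2) = 1/(2\sqrt{D^2-\theta^2})$.
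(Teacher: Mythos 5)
Your proposal is correct and follows essentially the same route as the paper's proof: normalize the kernel via the Gaussian integral to get the quadratic $4K^2-4KD+\theta^2=0$ with roots $K_\pm=(D\pm\sqrt{D^2-\theta^2})/2$, then impose stationarity of a Gaussian ansatz $\mathcal{N}(0,\tau^2)$ and select the root consistent with a positive variance, yielding $\tau^2=1/(2\sqrt{D^2-\theta^2})$. Your reduction of the stationarity condition to $1/(2\tau^2)=D-2K$ by feeding in the normalization identity $\theta^2=4KD-4K^2$ is a slightly cleaner presentation of the paper's root-selection step (the paper instead matches its two $\pm$ expressions for $K$), but the substance of the argument is identical.
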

\begin{proof}
For all $x \in \mathbb{R}$, $p(y|x)$ should satisfy
$\int_{-\infty}^\infty \exp{(\theta xy + Ky^2 - Kx^2 - Dy^2 -C)} \mathrm{d}y = 1$. By solving it via Gaussian integral, we have
$$\sqrt{\frac{\pi}{-(K-D)}}\exp{\left(-Kx^2 - \frac{(\theta x)^2}{4(K-D)}\right)} = \exp{(C)}$$
which leads to 
$K = \frac{1}{2}(D \pm \sqrt{D^2 - \theta^2})$.
This $K$ is indeed strictly smaller than $D$ regardless of the sign in the $\pm$.

Assume the stationary distribution is $\mathcal{N}(0,\tau^2)$. Then, $K$, $D$, and $\theta$ should satisfy
$$\int p(y|x)\mathcal{N}(x|0,\tau^2) dx = \mathcal{N}(y|0,\tau^2) $$
$$\frac{\theta^2}{\frac{1}{\tau^2}+2K} + 2(K - D) = -\frac{1}{\tau^2}$$
$$K = \frac{1}{2}\left(D-\frac{1}{\tau^2} \pm \sqrt{D^2-\theta^2}\right)$$
\noindent This $K$ is indeed strictly smaller than $D$ regardless of the sign in the $\pm$.

Therefore, $K$ is uniquely specified as $K = \frac{1}{2}(D - \sqrt{D^2 - \theta^2})$.
The variance of the stationary distribution is obtained by solving
$$\frac{1}{2}(D - \sqrt{D^2 - \theta^2}) = \frac{1}{2}(D - \frac{1}{\tau^2} + \sqrt{D^2 - \theta^2}).$$
\end{proof}

By using the constructed Markov kernel $q \in \mathcal{E}$, we show that the divergence minimization problem reduces to a closed convex problem. Consider
$$p(y,x) = \mathcal{N}(\bm{0}, M) \in \mathcal{M},\ M = \begin{pmatrix}
    \tau^2&m\\
    m&\tau^2
\end{pmatrix}$$
and 
$$q(y|x) = \exp{(\theta xy + Ky^2 - Kx^2 - Dy^2 -C)} \in \mathcal{E},$$
where $M \succ O$. Then, the divergence rate is explicitly calculated as
\begin{align*}
D(p|q)&= \int p(y,x) \log{p(y|x)} dxdy - \int p(y,x) \log{q(y|x)} dxdy\\
&= -1 + \frac{1}{2} -\frac{1}{2} \log{\{(2\pi)^2|M|\}} + \frac{1}{2}\log{(2\pi\tau^2)} - \mathrm{Tr}(QM) + C\\ 
\end{align*}
where $Q = \begin{pmatrix}
    K-D&\theta/2\\
    \theta/2 & -K
\end{pmatrix}$. Therefore, the problem is reformulated as a minimization problem with respect to $m \in \mathbb{R}$:
$$\mathrm{minimize}\ D(m)$$
$$\mathrm{s.t.}\ M \succ O, m \in \mathbb{R}.$$
The positive definiteness in the constraint can be relaxed to positive semidefiniteness since the objective function tends to positive infinity for $m$ such that $M$ admits a zero eigenvalue.
Then, the existence and the uniqueness of the optimal solution are guaranteed since the objective function is convex with respect to $M$, which is an affine function of $m$, and the feasible region is a closed convex set.

\section{Extension of Proposition~\ref{prop:ar1} to $d \geq 2$ and $p \geq 2$} \label{appendix:var1-and-var2}

\subsection{Case: $d=1$}

We again employ the notations $\mathcal{E}$ and $\mathcal{M}$ from Proposition~\ref{prop:pythagorean}, adapted to each context. 
Naturally, the argument presented in \ref{appendix:ar1} extends to higher-dimensional settings.
\begin{proposition} \label{prop:var1}
    Suppose $d = 1$ and $p \geq 1$. 
    For the dependence function $\bm{h}(x_t,x_{t-1}) = x_t \otimes x_{t-1}$ and the stationary distribution $r(x) = \mathcal{N}(\bm{0}, B), B \succ O $, the minimum information Markov model \eqref{eq:mininfo-markov-model} exists uniquely. Moreover, it is identical to the Gaussian VAR(1) model (Example~\ref{example:var1}).
\end{proposition}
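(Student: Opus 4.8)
The plan is to follow the same three-step template used for Proposition~\ref{prop:ar1} in \ref{appendix:ar1}, now with matrices in place of scalars: (i) exhibit a convenient element $q\in\mathcal{E}$ whose dependence parameter is the prescribed $\Theta$; (ii) reduce the divergence-rate minimization $\min_{w\in\mathcal{M}}D(w|q)$ to an optimization over the jointly Gaussian members of $\mathcal{M}$; and (iii) show this reduced problem is a strictly convex minimization over a compact feasible set, so that a unique minimizer exists, and then check that the minimizer lies in $\mathcal{E}\cap\mathcal{M}$. Uniqueness of the minimum information kernel follows from Proposition~\ref{prop:pythagorean}, while identifying the minimizer as a Gaussian VAR(1) kernel establishes the last claim.

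For step (i), rather than solving a matrix Riccati equation from scratch I would build $q$ directly from Example~\ref{example:var1}. For any $(A,\Sigma)\in\mathcal{S}\times\mathrm{S}^p_{++}$ the VAR(1) kernel lies in $\mathcal{E}$ with dependence parameter $A^\top\Sigma^{-1}$. Taking $\Sigma=cI$ and $A=c\,\Theta^\top$ with $c>0$ small enough that $\rho(c\Theta^\top)<1$ gives $A^\top\Sigma^{-1}=\Theta$ and $A\in\mathcal{S}$, so the resulting Gaussian kernel is a legitimate $q\in\mathcal{E}$ with the required dependence structure. Note that $q$ need not have stationary distribution $\mathcal{N}(\bm 0,B)$; the marginal constraint is imposed only on the competitor $w\in\mathcal{M}$.

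For steps (ii)--(iii), I would parametrize the jointly Gaussian elements of $\mathcal{M}$ by
\[
p(x_t,x_{t-1})=\mathcal{N}(\bm 0,M),\qquad M=\begin{pmatrix} B & S\\ S^\top & B\end{pmatrix},
\]
where the common diagonal blocks $B$ enforce stationarity and the free parameter is the lag-one cross-covariance $S\in\mathbb{R}^{p\times p}$. Writing $\log q$ as a quadratic form $z^\top Qz-C$ in $z=(x_t,x_{t-1})$ and using the Gaussian trace formulas from the preliminaries, the divergence rate evaluates to
\[
D(p|q)=-\tfrac12\log\det M-\mathrm{Tr}(QM)+\text{const},
\]
exactly as in the scalar computation but with $M$ now a $2p\times 2p$ matrix. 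Since $-\log\det M$ is strictly convex in $M$, $M$ is affine (and injective) in $S$, the feasible set $\{S:M\succeq O\}$ is compact and convex, and $D(p|q)\to+\infty$ along any sequence approaching the boundary of the positive semidefinite cone (where $\det M\to0$), a unique interior minimizer $S_*$ exists. A point specific to $p\ge2$ is that any $M\succ O$ with equal diagonal blocks forces the implied autoregressive matrix $A_*=S_*B^{-1}$ to be stable: putting $\tilde S=B^{-1/2}S_*B^{-1/2}$, the Schur-complement condition $M\succ O$ is equivalent to $\|\tilde S\|_{\mathrm{op}}<1$, and $A_*$ is similar to $\tilde S$, so $\rho(A_*)<1$ and $A_*\in\mathcal{S}$. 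Hence the minimizer is a genuine stationary VAR(1) kernel with marginal $\mathcal{N}(\bm 0,B)$, i.e.\ an element of $\mathcal{M}$.

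The hard part will be closing the loop by verifying that this Gaussian minimizer lies in $\mathcal{E}$ with dependence parameter exactly $\Theta$, so that Proposition~\ref{prop:pythagorean} applies and yields uniqueness. Concretely, I would compute the stationarity condition $\nabla_S D(p|q)=0$ and show it reproduces the defining relation of $\mathcal{E}$ --- namely that the conditional $p(x_t|x_{t-1})$ carries the dependence term $\mathrm{vec}(\Theta)^\top(x_t\otimes x_{t-1})$ --- being careful with the transpose bookkeeping induced by the ordering in $x_t\otimes x_{t-1}=\mathrm{vec}(x_{t-1}x_t^\top)$. This identification, which is implicit and trivial in the scalar case, is where the matrix structure makes the argument most delicate; the remaining estimate ensuring that $-\log\det M$ blows up along \emph{every} boundary sequence of $S$ (not merely radially) is the other point requiring care.
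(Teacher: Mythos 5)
Your proposal is correct, and its core (steps (ii)--(iii)) coincides with the paper's own proof in Appendix B.1: parametrize the Gaussian members of $\mathcal{M}$ by the cross-covariance block, write $D(p|q) = -\tfrac{1}{2}\log\det M - \mathrm{Tr}(QM) + \mathrm{const}$, and obtain existence and uniqueness from strict convexity of $-\log\det$, affineness of $M$ in the off-diagonal block, compactness and convexity of the feasible set, and blow-up of the objective at the boundary of the positive semidefinite cone. Where you genuinely depart from the paper is step (i): the paper builds the reference kernel $q\in\mathcal{E}$ from scratch, solving $\Theta(K-D)^{-1}\Theta^\top + 4K = O$ under the ansatz $D=\alpha I$ with an implicit-function-theorem argument around $S=I$ for sufficiently large $\alpha$, plus a corollary showing the resulting kernel admits some Gaussian stationary distribution; you instead rescale Example~\ref{example:var1}, taking $\Sigma=cI$ and $A=c\Theta^\top$ with $c\rho(\Theta)<1$, so that $A^\top\Sigma^{-1}=\Theta$ and the VAR(1) kernel is immediately a stationary element of $\mathcal{E}$ with the prescribed dependence parameter. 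Your route is shorter and sidesteps the matrix-equation/IFT machinery entirely; the paper's heavier construction buys nothing extra here, since Example~\ref{example:var1} is already established earlier in the text. You also make explicit two points the paper leaves implicit: that the minimizer is a stable VAR(1) kernel with marginal $\mathcal{N}(\bm{0},B)$ (via $A_*=S_*B^{-1}=B^{1/2}\tilde{S}B^{-1/2}$ and $\rho(A_*)\le\|\tilde{S}\|_{\mathrm{op}}<1$), and that membership in $\mathcal{E}$ with parameter exactly $\Theta$ follows from the first-order condition --- indeed, differentiating gives $\nabla_{S}D=0 \iff (M^{-1})_{21}=-\Theta$, which is precisely the statement that the conditional law of the minimizer carries the cross term $x_{t-1}^\top\Theta x_t$, so the step you flag as delicate does close cleanly. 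The one gloss you share with the paper: neither argument justifies restricting the minimization over $\mathcal{M}$ to jointly Gaussian laws. This is harmless --- $\log q$ is quadratic, so $\mathbb{E}_p[\log q]$ depends on $p$ only through its second moments, and among laws with fixed second moments the Gaussian minimizes $-H(p)$ --- but it deserves a sentence in both proofs.
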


The following lemmas are the higher-dimensional version ($d=2, p\geq 2$) of Lemma~\ref{lemma:ar1-markov-kernel}.

\begin{lemma}
    For any $\Theta \in \mathbb{R}^{p\times p}$, there exist $D,K\in\mathbb{R}^{p\times p}$ and $C\in\mathbb{R}$ such that
    $$p(\bm{y}|\bm{x}) = \exp{\left(\bm{x}^\top \Theta \bm{y}  + \bm{y}^\top K\bm{y} - \bm{x}^\top K\bm{x} - \bm{y}^\top D\bm{y} - C \right)}, \bm{x}, \bm{y} \in \mathbb{R}^p$$
    is a stationary Markov kernel.
\end{lemma}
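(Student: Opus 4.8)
The plan is to recognize that, after symmetrizing the quadratic forms, the proposed kernel is Gaussian in $\bm{y}$ with a mean that is linear in $\bm{x}$, so it is simply a VAR(1) transition kernel whose coefficient matrix and noise covariance we are free to tune. Since $\bm{y}^\top K\bm{y}$, $\bm{x}^\top K\bm{x}$ and $\bm{y}^\top D\bm{y}$ depend only on the symmetric parts of $K$ and $D$, I would take $K$ and $D$ symmetric without loss of generality and introduce the auxiliary matrix $S = D - K$. The $\bm{y}$-dependent part of the exponent is then $-\bm{y}^\top S\bm{y} + (\Theta^\top\bm{x})^\top\bm{y}$, so integrability over $\bm{y}$ forces $S \succ O$.

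First I would impose the normalization $\int p(\bm{y}|\bm{x})\,\mathrm{d}\bm{y} = 1$. Applying the Gaussian integral formula from the preliminaries with $A = 2S$ and $b = \Theta^\top\bm{x}$, the integral equals $\sqrt{(2\pi)^p/\det(2S)}\,\exp(\tfrac14\bm{x}^\top\Theta S^{-1}\Theta^\top\bm{x})$. Matching the quadratic terms in $\bm{x}$ against the factor $\exp(-\bm{x}^\top K\bm{x})$ yields the single matrix equation
$$K = \frac{1}{4}\,\Theta S^{-1}\Theta^\top,$$
and matching the constants fixes $C = \tfrac12\log\{(2\pi)^p/\det(2S)\}$. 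The key observation is that this can be read in reverse: for \emph{any} $S \succ O$ one may simply \emph{define} $K = \tfrac14\Theta S^{-1}\Theta^\top$, $D = K + S$, and $C$ as above, and the resulting kernel automatically integrates to one. Thus, unlike the scalar Lemma~\ref{lemma:ar1-markov-kernel} where $D$ is prescribed, here the full freedom in $(D,K)$ is at our disposal.

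It then remains to secure stationarity. Completing the square shows $p(\bm{y}|\bm{x}) = \mathcal{N}(\bm{y};\tfrac12 S^{-1}\Theta^\top\bm{x},\tfrac12 S^{-1})$, i.e.\ a Gaussian VAR(1) kernel with coefficient matrix $A = \tfrac12 S^{-1}\Theta^\top$ and noise covariance $\tfrac12 S^{-1}$. By the stationarity criterion recalled in Example~\ref{example:var1}, such a kernel admits a stationary distribution precisely when $\rho(A) < 1$. Since $S$ is still free, I would take $S = cI$ with $c > 0$, giving $A = (2c)^{-1}\Theta^\top$ and hence $\rho(A) = \rho(\Theta)/(2c)$; choosing any $c > \rho(\Theta)/2$ makes $\rho(A) < 1$, which completes the construction.

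The routine part is the Gaussian integral together with the completion of the square; the only genuinely substantive point is guaranteeing the existence of a stationary distribution, and this is exactly where the extra latitude in choosing $D$ (equivalently $S$) is essential — scaling $S$ up shrinks the spectral radius of $A$ below one for every $\Theta$, so no restriction on $\Theta$ is required. I would also note in passing that keeping $S$ general, rather than a multiple of the identity, is what will be needed downstream to recover the one-to-one correspondence with a \emph{prescribed} Gaussian stationary distribution in Proposition~\ref{prop:var1}.
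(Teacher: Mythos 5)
Your proof is correct, but it takes a genuinely different route from the paper's. The paper first \emph{prescribes} $D=\alpha I$ and is then forced to solve the nonlinear matrix equation $K=\tfrac14\Theta(D-K)^{-1}\Theta^\top$ for $K$; after normalizing $R=D-K=\alpha S$ this becomes $S+\tfrac{1}{4\alpha^2}\Theta S^{-1}\Theta^\top=I$, which the paper solves only perturbatively, invoking the implicit function theorem around $(S,T)=(I,O)$ for $\alpha$ sufficiently large, and then argues stationarity from $S\approx I$. Your key move --- taking $S=D-K$ itself as the free parameter, so that $K=\tfrac14\Theta S^{-1}\Theta^\top$ and $D=K+S$ are \emph{defined} in closed form and normalization holds identically --- dissolves the fixed-point equation entirely: no implicit function theorem, no ``sufficiently large $\alpha$,'' and explicit formulas for $K$, $D$, $C$, with stationarity secured by the elementary spectral-radius bound $\rho\bigl(\tfrac{1}{2c}\Theta^\top\bigr)<1$ for $c>\rho(\Theta)/2$. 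This works because the lemma only asserts existence of \emph{some} $(D,K,C)$, leaving the parametrization at your disposal. What the paper's heavier machinery buys is reusability: the same equation $S+TS^{-1}T^\top=I$ and the same IFT argument are recycled in the subsequent lemma and corollary of the appendix, where the stationary distribution $\mathcal{N}(\bm{0},B)$ is \emph{prescribed} and the extra freedom you exploit here is no longer available; your closed-form shortcut does not carry over to that constrained setting, as you yourself hint at in your closing remark. For the statement as given, however, your argument is the more elementary and fully constructive of the two.
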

\begin{proof}
    For all $x$, the Markov kernel $p$ should satisfy
    \begin{align*}
    1 &= \int p(\bm{y}|\bm{x}) d\bm{y}\\
    &= \exp{(- \bm{x}^\top K\bm{x}-C)} \int \exp{\left(-\frac{1}{2}\bm{y}^\top \{-2(K-D)\}\bm{y} + \bm{x}^\top \Theta \bm{y} \right)} d\bm{y}\\
    &\propto \exp{\left(-\frac{1}{4} \bm{x}^\top \Theta (K-D)^{-1} \Theta^\top \bm{x} - \bm{x}^\top K \bm{x} \right)},
    \end{align*}
    which leads to 
    \begin{equation} 
    \Theta (K-D)^{-1} \Theta^\top + 4K = O.
    \end{equation}
    Let $\alpha>0$ and suppose that $D=\alpha I$ and $R=D-K=\alpha I-K$.
    Then, we have
    $$-\Theta R^{-1}\Theta^\top + 4(\alpha I-R) = O,$$
    or equivalently
    $$\frac{1}{4}\Theta R^{-1}\Theta^\top + R = \alpha I.$$
    Normalize $R=\alpha S$ to obtain
    $$S + \frac{1}{4\alpha^2}\Theta S^{-1}\Theta^\top = I.$$
    This equation has a solution around $S=I$ if $\alpha$ is sufficiently large.
    Indeed, if $T=\Theta/(2\alpha)$ is sufficiently small,
    \[
     S +  TS^{-1}T^\top = I
    \]
    has a solution $S$ around $S=I$ due to the implicit function theorem
    because the derivative of $S+TS^{-1}T^\top$ with respect to $S$ at $(S,T)=(I,O)$ is the identity map on $\mathbb{R}^{p\times p}$.
    
    Tracing back the above calculation, we have
    $$p(\bm{y}|\bm{x}) \propto \exp\left(-\frac{1}{4}(\bm{y}-\frac{S^{-1}\Theta^\top}{2\alpha}\bm{x})^\top S(\bm{y}-\frac{S^{-1}\Theta^\top}{2\alpha}\bm{x})\right).$$
    This Markov kernel is stationary if $\alpha$ is sufficiently large (because $\Theta$ is fixed and $S$ is around $I$).
\end{proof}

\begin{lemma}
    A Markov model with a kernel
    $$p(\bm{y}|\bm{x}) = \exp{\left(\bm{x}^\top \Theta \bm{y} + \bm{y}^\top K\bm{y} - \bm{x}^\top K\bm{x} - \bm{y}^\top D\bm{y} - C \right)}, \bm{x}, \bm{y} \in \mathbb{R}^p$$
    has a Gaussian distribution $\mathcal{N}(\bm{0},B)$ as its stationary distribution when $(K, \Theta) \in \mathbb{R}^{p\times p} \times \mathbb{R}^{p\times p}$ satisfies
    $$2(D-K) -\Theta^\top (B^{-1} + 2K)^{-1} \Theta   = B^{-1}.$$
\end{lemma}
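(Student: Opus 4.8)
The plan is to verify the stationarity (global balance) equation directly, namely that
$$\int p(\bm{y}|\bm{x})\,\mathcal{N}(\bm{x};\bm{0},B)\,\mathrm{d}\bm{x} = \mathcal{N}(\bm{y};\bm{0},B),$$
by carrying out the Gaussian integral over $\bm{x}$ and comparing the resulting density with $\mathcal{N}(\bm{0},B)$. Throughout I treat $K$ and $D$ as symmetric, since only their symmetric parts enter the quadratic forms, and I assume $2K+B^{-1}\succ O$ so that the $\bm{x}$-integral converges; this holds in the regime of the preceding lemma, where $D=\alpha I$ with $\alpha$ large and $K$ is close to $O$.

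First I would substitute the kernel and the Gaussian $\mathcal{N}(\bm{x};\bm{0},B)$ into the integrand and collect the exponent as a quadratic in $\bm{x}$. The $\bm{x}$-dependent part is $-\frac{1}{2}\bm{x}^\top(2K+B^{-1})\bm{x} + (\Theta\bm{y})^\top\bm{x}$, so applying the Gaussian integral identity from the preliminaries with $A=2K+B^{-1}$ and $b=\Theta\bm{y}$ produces the factor $\exp\left(\frac{1}{2}\bm{y}^\top\Theta^\top(2K+B^{-1})^{-1}\Theta\,\bm{y}\right)$ together with a constant independent of $\bm{y}$. Combining this with the remaining terms $\bm{y}^\top K\bm{y}-\bm{y}^\top D\bm{y}$ carried out of the kernel, the full exponent becomes a pure quadratic form in $\bm{y}$ with no linear term, plus a $\bm{y}$-independent constant.

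The key simplifying observation is that, because $C$ was chosen so that $\int p(\bm{y}|\bm{x})\,\mathrm{d}\bm{y}=1$, Fubini's theorem shows that $\int p(\bm{y}|\bm{x})\,\mathcal{N}(\bm{x};\bm{0},B)\,\mathrm{d}\bm{x}$ is automatically a probability density in $\bm{y}$. Since it is a centered Gaussian form (no linear term appears), it must equal $\mathcal{N}(\bm{0},B')$ for some $B'$, and once the precision matrix is fixed the normalizing constant is determined. Hence it suffices to match the quadratic (precision) parts, and the constants coincide automatically. Equating the quadratic coefficient of the computed exponent with $-\frac{1}{2}B^{-1}$ gives
$$K - D + \frac{1}{2}\Theta^\top(2K+B^{-1})^{-1}\Theta = -\frac{1}{2}B^{-1},$$
which, after multiplying by $-2$, is exactly the stated condition $2(D-K)-\Theta^\top(B^{-1}+2K)^{-1}\Theta=B^{-1}$.

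The only genuine obstacle is the careful bookkeeping of the quadratic forms and ensuring the positive definiteness (hence invertibility) of $2K+B^{-1}$ needed for convergence of the $\bm{x}$-integral; everything else is a routine Gaussian computation, and the usual headache of matching normalizing constants is bypassed entirely by the density argument above.
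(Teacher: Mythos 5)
Your proposal is correct and follows essentially the same route as the paper's own proof: both verify the stationarity equation $\int p(\bm{y}|\bm{x})\,\mathcal{N}(\bm{x};\bm{0},B)\,\mathrm{d}\bm{x} = \mathcal{N}(\bm{y};\bm{0},B)$ by carrying out the Gaussian integral over $\bm{x}$ and matching the resulting precision matrix $2(D-K)-\Theta^\top(B^{-1}+2K)^{-1}\Theta$ with $B^{-1}$. Your additional remarks (the convergence condition $2K+B^{-1}\succ O$ and the Fubini argument showing the normalizing constants match automatically) only make explicit what the paper leaves implicit in its ``proportional to'' statement.
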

\begin{proof}
Assume the stationary distribution is $\mathcal{N}(0,B)$. The condition that $B$ should satisfy is
$$\int p(\bm{y}|\bm{x}) \mathcal{N}(\bm{x}; 0,B) d\bm{x} = \mathcal{N}(\bm{y}; 0,B)$$
The left hand side is proportional to 
$\exp{\left(-\frac{1}{2}\bm{y}^\top (-2(K-D)-\Theta^\top (B^{-1} + 2K)^{-1} \Theta)\bm{y}\right)}$.
\end{proof}

\begin{corollary}
    There exist $D \in \mathbb{R}^{p\times p}$, $K \in \mathbb{R}^{p\times p}$ and $C \in \mathbb{R}$ such that Lemma 5 holds.
\end{corollary}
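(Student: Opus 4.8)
The plan is to reduce the existence of $(D,K,C)$ to the one-to-one correspondence between the minimum information parameters $(\Theta,B)$ and the classical VAR(1) parameters $(A,\Sigma)$ established in Example~\ref{example:var1}. First I would fix the target pair $(\Theta,B)\in\mathbb{R}^{p\times p}\times\mathrm{S}^{p}_{++}$ and invoke the Proposition of Example~\ref{example:var1}, which through the algebraic Riccati equation yields a unique $(A,\Sigma)\in\mathcal{S}\times\mathrm{S}^{p}_{++}$ obeying the two defining relations $\Theta=A^\top\Sigma^{-1}$ and the Lyapunov identity $B=ABA^\top+\Sigma$.

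Next I would read off candidate values of $D$, $K$, and $C$ directly from the minimum information representation of the Gaussian VAR(1) kernel displayed in Example~\ref{example:var1}. Matching the bilinear and quadratic terms of $\{(2\pi)^p|\Sigma|\}^{-1/2}\exp(-\tfrac12(y-Ax)^\top\Sigma^{-1}(y-Ax))$ against the exponent $x^\top\Theta y+y^\top Ky-x^\top Kx-y^\top Dy-C$ suggests
$$K=\tfrac12 A^\top\Sigma^{-1}A,\qquad D=K+\tfrac12\Sigma^{-1},\qquad C=\tfrac12\log\{(2\pi)^p|\Sigma|\}.$$
With these choices the kernel of the preceding lemma coincides identically with the genuine VAR(1) Gaussian kernel, so that it automatically normalizes to $1$ and inherits $\mathcal{N}(\bm{0},B)$ as its stationary distribution; hence $C$ is fixed as the log-normalizer.

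To make the argument self-contained I would then verify the two matrix equations explicitly rather than rely on the identification of kernels. The normalization identity $\Theta(K-D)^{-1}\Theta^\top+4K=O$ follows immediately: since $K-D=-\tfrac12\Sigma^{-1}$ we have $(K-D)^{-1}=-2\Sigma$, and substituting $\Theta=A^\top\Sigma^{-1}$ collapses the left-hand side to $-2A^\top\Sigma^{-1}A+4K=O$. For the stationarity equation $2(D-K)-\Theta^\top(B^{-1}+2K)^{-1}\Theta=B^{-1}$, note that $2(D-K)=\Sigma^{-1}$ and $B^{-1}+2K=B^{-1}+A^\top\Sigma^{-1}A$, so the claim reduces to $\Sigma^{-1}-\Sigma^{-1}A(B^{-1}+A^\top\Sigma^{-1}A)^{-1}A^\top\Sigma^{-1}=B^{-1}$.

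The key step, and the only nonroutine point I anticipate, is recognizing this last identity as the Woodbury matrix-inversion formula applied to $(\Sigma+ABA^\top)^{-1}$, whose expansion is exactly the left-hand expression; combining it with the Lyapunov relation $B=\Sigma+ABA^\top$ yields $B^{-1}$ and closes the argument. Everything else is substitution, so the genuine content is the pairing of the Riccati-based correspondence of Example~\ref{example:var1} with this single Woodbury identity.
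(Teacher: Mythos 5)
Your proof is correct, but it takes a genuinely different route from the paper's. The paper's own argument stays self-contained within the appendix: it sets $D=\alpha I$ and, after the substitutions $R=D-K$ and $B^{-1}+2K=2\alpha S$, the stationarity condition of Lemma 5 collapses to the fixed-point equation $S+\frac{1}{4\alpha^2}\Theta^\top S^{-1}\Theta=I$ --- the same equation as in Lemma 4 --- which is solved near $S=I$ by the implicit function theorem once $\alpha$ is large; the stationary covariance is then \emph{derived} from $B^{-1}=2\alpha S-2K$, with positive definiteness from $\lambda_{\min}(B^{-1})\geq 2\alpha\lambda_{\min}(S)-2\|K\|>0$, so in the paper $B$ is an output of the construction. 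You instead treat both $\Theta$ and $B$ as prescribed, import the Riccati-based correspondence $(\Theta,B)\leftrightarrow(A,\Sigma)$ from Example~\ref{example:var1}, and read off $(D,K,C)$ in closed form from the VAR(1) density, verifying the normalization identity of Lemma 4 and the stationarity identity of Lemma 5 via the Woodbury formula combined with the Lyapunov relation $B=ABA^\top+\Sigma$; this algebra checks out. What your route buys: explicit formulas instead of an implicit-function existence statement, and a stronger conclusion (matching a \emph{given} $B$), which by itself essentially settles the existence half of Proposition~\ref{prop:var1} with no divergence minimization at all. What it costs: the divergence-minimization route of the appendix is explicitly intended as an alternative to the direct parametrization argument, ``pointing toward a more general framework,'' and by resting the corollary on the proposition of Example~\ref{example:var1} you make that route dependent on the Riccati machinery, which does not generalize --- for higher order $d$ no such correspondence is established (Example~\ref{example:ar-d} only states it is ``expected''), whereas the paper's large-$\alpha$ implicit-function device is precisely the tool that carries over. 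Your proof also inherits whatever is left implicit in that proposition's proof (e.g., that the positive semidefinite Riccati solution $\Sigma$ is in fact positive definite and that the recovered $A$ satisfies $\rho(A)<1$), assumptions the paper's construction never needs. Since the corollary's statement leaves the quantifier on $B$ ambiguous, both readings discharge its downstream role, namely exhibiting a proper Gaussian kernel in $\mathcal{E}$ against which the divergence rate is minimized.
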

\begin{proof}
    Let $\alpha > 0$ and suppose that $D = \alpha I$ and $R = D-K = \alpha I - K$. 
    Then, we have 
    $$2R - \Theta^\top (B^{-1} + 2(\alpha I - R))^{-1}\Theta = B^{-1}.$$
    Denote $R' = B^{-1} - 2R + 2\alpha I$ to obtain
    $$-\Theta^\top R'^{-1} \Theta = R'-2\alpha I$$
    Normalize $R' = 2\alpha S$ to obtain
    $$S + \frac{1}{4\alpha^2}\Theta^\top S^{-1} \Theta = I.$$
    This is the same form in Lemma 4, thus it has a positive definite solution $S$ around $S = I$ if $T = \Theta/(2\alpha)$ is sufficiently small. Its positive definiteness is guaranteed because the minimum eigenvalue of the solution map $S(T)$ is continuous and $\lim_{T \to O} \lambda_{\min}(S(T)) = \lambda_{\min}(I) = 1$ holds.
    Tracing back the above calculation, we have 
    $$B^{-1} = 2\alpha S + 2(\alpha I-K) - 2\alpha I  = 2\alpha S - 2K.$$
    Such $B$ exists as a positive definite matrix if $\alpha$ is sufficiently large since
    $$\lambda_{\min}(B^{-1}) \geq 2\alpha\lambda_{\min}(S) - 2 \|K\| > 0$$
    where $\|\cdot\|$ denotes the spectral norm of a matrix.
\end{proof}

For a higher dimensional minimum information Markov model which is related to the Gaussian VAR(1), the following minimization problem shall be considered.
Suppose $(y, x)$ follows a Gaussian distribution with marginal covariance $B \ (\succ O)$:
\[
p(y, x) = \mathcal{N}\!\left(
    \bm{0}, M
\right),
\]
where
$M = 
\begin{pmatrix}
    B & M_1 \\
    M_1^\top & B
\end{pmatrix} \succ O$
and
$$q(y|x) = \exp{\left(x^\top \Theta y  + y^\top K y - x^\top K x - y^\top D y - C\right)} \in \mathcal{E}. $$
Then, the divergence rate between $p(y|x)$ and $q(y|x)$ is explicitly calculated as 
\begin{align*}
D(p|q) = D(M) &= \int\int p(y,x) \log\frac{p(y|x)}{q(y|x)} dxdy\\
&= -\frac{2p}{2} + \frac{p}{2} - \frac{1}{2}\log{\left((2\pi)^{2p}|M|\right)} + \frac{1}{2}\log{((2\pi)^p |B|)} - \mathrm{Tr}(\begin{pmatrix}
    K-D&\Theta^\top/2\\
    \Theta/2&-K
\end{pmatrix}M) + C.
\end{align*}

\noindent Therefore, the problem is reformulated as a minimization problem on matrices:
$$\mathrm{minimize}\ D(M_1)$$
$$\mathrm{s.t.}\ M \succ O, M_1\in\mathbb{R}^{p\times p}.$$
The positive definiteness in the constraint can be relaxed to positive semidefiniteness since the objective function tends to positive infinity for $M_1$ such that $M$ admits a zero eigenvalue.
Then, the existence and the uniqueness of the optimal solution are guaranteed since the objective function is convex with respect to $M_1$ and the feasible region is a closed convex set.

\subsection{General case}

We first consider the case of $d=2$ as an example, but the similar argument holds for a general $d \in \mathbb{N}$.
Assume the stationary distribution of $p(x_3|x_2,x_1)$ is Gaussian $\mathcal{N}(0,B)$, where $B$ is a positive definite $p \times p$ matrix here. Then, the joint distribution is also Gaussian, i.e., $p(x_3,x_2,x_1) = \mathcal{N}\left(\bm{0}, M^{(3)}\right) \in \mathcal{M}$ for some matrices $(M_1, M_2) \in \mathcal{D}$ where
$$M^{(3)} := \begin{pmatrix}
    B&M_1&M_2\\
    M_1^\top& B & M_1\\
    M_2^\top & M_1^\top & B
\end{pmatrix}, M^{(2)} := \begin{pmatrix}
    B&M_1\\
    M_1^\top& B\\
\end{pmatrix},
$$
$$\mathcal{D}:=\Big\{(M_1,M_2)\ \big|\ M^{(2)} \succ O, M^{(3)}\succ O\Big\} =\Big\{(M_1,M_2)\ \big|\ M^{(3)}\succ O\Big\}.$$

Also, consider another Gaussian kernel $q$ in the following form:
$$\log{q(x_3|x_2,x_1)} = (x_3,x_2,x_1) Q (x_3,x_2,x_1)^\top - c, \quad Q \in \mathbb{R}^{3p \times 3p}, c\in\mathbb{R}.$$
The divergence rate from $p$ to $q$ is 
\begin{align*}
D(p|q) = D(M_1,M_2) &= \int p(x_3,x_2,x_1) \log\frac{p(x_3|x_2,x_1)}{q(x_3|x_2,x_1)} dx_3dx_2dx_1\\
&= \left(-\frac{3}{2}p + \frac{2}{2}p - \log{\sqrt{(2\pi)^{3p}|M^{(3)}|}} + \log{\sqrt{(2\pi)^{2p} |M^{(2)}|}}\right) - \left( \mathrm{Tr}(QM^{(3)}) - c \right)\\
&= -\frac{1}{2}\log\det M^{(3)} + \frac{1}{2}\log\det M^{(2)} - \mathrm{Tr}(QM^{(3)}) + \mathrm{const.}\\
%
\end{align*}
which is obviously a strictly convex function of $M_2$. Also, inside $\mathcal{D}$ where the Schur complement is valid, 
\begin{align*}
    D(p|q) &= -\frac{1}{2}\log\det(B-\begin{pmatrix}
    M_1&M_2
\end{pmatrix} \begin{pmatrix}
    B&M_1\\
    M_1^\top& B
\end{pmatrix}^{-1} \begin{pmatrix}
    M_1^\top\\M_2^\top
\end{pmatrix}) - \mathrm{Tr}(QM) + \mathrm{const.}, \\ 
\end{align*}
the mapping $\left(\begin{pmatrix}
    B&M_1\\
    M_1^\top& B
\end{pmatrix}, \begin{pmatrix}
    M_1^\top\\M_2^\top
\end{pmatrix}\right) \to \begin{pmatrix}
    M_1&M_2
\end{pmatrix} \begin{pmatrix}
    B&M_1\\
    M_1^\top& B
\end{pmatrix}^{-1} \begin{pmatrix}
    M_1^\top\\M_2^\top
\end{pmatrix}$ is matrix convex because its epigraph 
\begin{align*}
\mathcal{E} &= \{\left(\begin{pmatrix}
    B&M_1\\
    M_1^\top& B
\end{pmatrix}, \begin{pmatrix}
    M_1^\top\\M_2^\top
\end{pmatrix},T\right)\ |\ T \succeq \begin{pmatrix}
    M_1&M_2
\end{pmatrix} \begin{pmatrix}
    B&M_1\\
    M_1^\top& B
\end{pmatrix}^{-1} \begin{pmatrix}
    M_1^\top\\M_2^\top
\end{pmatrix}, \begin{pmatrix}
    B&M_1\\
    M_1^\top& B
\end{pmatrix} \succ O\} \\
&= \{\left(\begin{pmatrix}
    B&M_1\\
    M_1^\top& B
\end{pmatrix}, \begin{pmatrix}
    M_1^\top\\M_2^\top
\end{pmatrix},T\right)\ |\begin{pmatrix}
    T&M_1&M_2\\
    M_1^\top& B & M_1\\
    M_2^\top & M_1^\top & B
\end{pmatrix} \succeq O, \begin{pmatrix}
    B&M_1\\
    M_1^\top& B
\end{pmatrix} \succ O\} 
\end{align*}
is a convex set, and $-\log\det$ is both strictly convex and monotonically decreasing with respect to Loewner order, hence $D(p|q)$ is also a strictly convex function of $M_1$.
Therefore, to guarantee the existence and uniqueness of an optimal solution to the divergence minimization problem, since $D$ is continuous on $\mathcal{D}$, it suffices to show that $D$ blows up on the boundary of $\mathcal{D}$.

To do so, it is equivalent to show that if $D(p|q)$ is bounded, then $p$ belongs to a compact set of $\mathcal{D}$. 

\begin{lemma}
Denote a fixed $d$-th order stationary Gaussian kernel as $q(x_{d+1}|x_{d:1}) \propto \exp{(x_{d+1:1}^\top Q^{(d+1)} x_{d+1:1})}$, where $Q^{(d+1)} \in \mathbb{R}^{p(d+1)\times p(d+1)}$ is a fixed negative definite matrix.
Denote another $d$-th order stationary Gaussian kernel as $p$ such that its stationary distribution $\mathcal{N}(\bm{0},B)$ and joint distribution $p(x_{d+1:1})$ is  $ \mathcal{N}(0,M^{(d+1)})$, where $M^{(d+1)}$ is a $p(d+1) \times p(d+1)$ block Toeplitz matrix. 
Define the set of such $p$ as 
$$\mathcal{A} := \{M^{(d+1)} \mid p:\mathrm{stationary}, (M^{(d+1)})_{11} = B, D(p|q) \leq K\}, K:\mathrm{constant}$$
Then, $\mathcal{A}\ (\subset \mathbb{S}^{p(d+1)}_{++})$ is compact.
\end{lemma}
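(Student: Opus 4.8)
The plan is to prove compactness via the Heine--Borel route in the finite-dimensional space $\mathbb{S}^{p(d+1)}$: I will show that $\mathcal{A}$ is bounded, that it is \emph{uniformly} bounded away from the boundary of the positive definite cone (i.e. $M^{(d+1)}\succeq \epsilon I$ for one fixed $\epsilon>0$ on all of $\mathcal{A}$), and that it is closed; sequential compactness then follows at once.

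Boundedness is the easy half and does not even use the divergence constraint. Since $p$ is stationary, every diagonal block of the block Toeplitz matrix $M^{(d+1)}$ equals $B$, so $\mathrm{Tr}(M^{(d+1)}) = (d+1)\mathrm{Tr}(B)$ is fixed; as the eigenvalues of a positive definite matrix are positive and sum to the trace, $\lambda_{\max}(M^{(d+1)})\le (d+1)\mathrm{Tr}(B)=:R$ uniformly on $\mathcal{A}$. For the first part of the lower bound I would use the structure of the divergence rate computed above: both $p(x_{d+1}\mid x_{d:1})$ and $q(x_{d+1}\mid x_{d:1})$ are Gaussian in $x_{d+1}$, with innovation covariances $\Sigma_d:=\mathrm{Cov}_p(x_{d+1}\mid x_{d:1})$ (varying) and $\Sigma_q$ (fixed, since $Q^{(d+1)}$ is fixed). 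The averaged conditional KL then equals $\tfrac12\bigl(\log\det\Sigma_q-\log\det\Sigma_d-p+\mathrm{Tr}(\Sigma_q^{-1}\Sigma_d)+\text{(mean discrepancy)}\bigr)$, where the mean term is nonnegative and $\mathrm{Tr}(\Sigma_q^{-1}\Sigma_d)$ is nonnegative and bounded above because $\Sigma_d\preceq B\preceq RI$. Hence $D(p|q)\le K$ forces $-\tfrac12\log\det\Sigma_d$ to be bounded above, i.e. $\det\Sigma_d\ge\epsilon_0>0$; together with the eigenvalue ceiling $\Sigma_d\preceq RI$ this upgrades to $\Sigma_d\succeq\epsilon_1 I$ for a fixed $\epsilon_1>0$.

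The crux, and the step I expect to be the main obstacle, is that the constraint controls only the conditional law of $x_{d+1}$ and says nothing a priori about the past covariance $M^{(d)}$ degenerating: a point of $\mathcal{A}$ could in principle approach $\partial\mathcal{D}$ through $M^{(d)}$ becoming singular rather than through $\Sigma_d$, and there the two $\log\det$ terms in $D$ could cancel so that the divergence stays bounded. The device I would use to close this gap is the monotonicity of prediction-error covariances under stationarity. Writing $\Sigma_k:=\mathrm{Cov}_p(x_{k+1}\mid x_{k:1})$ with $\Sigma_0=B$, the law of total covariance combined with the shift-invariance of $p$ gives $\Sigma_d\preceq\Sigma_{d-1}\preceq\cdots\preceq\Sigma_0$, so $\Sigma_d$ is Loewner-smallest and every $\Sigma_k\succeq\Sigma_d\succeq\epsilon_1 I$. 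The Gaussian chain-rule factorization $\det M^{(d+1)}=\prod_{k=0}^{d}\det\Sigma_k\ge\epsilon_1^{\,p(d+1)}$ then bounds the determinant below, and with $\lambda_{\max}\le R$ this yields $\lambda_{\min}(M^{(d+1)})\ge\epsilon_1^{\,p(d+1)}/R^{\,p(d+1)-1}=:\epsilon_2>0$ uniformly on $\mathcal{A}$.

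Closedness is then routine: given any sequence in $\mathcal{A}$, boundedness yields a convergent subsequence $M_n\to M_*$, the uniform bound $M_n\succeq\epsilon_2 I$ forces $M_*\succeq\epsilon_2 I\succ O$, and on the region $\{M\succeq\epsilon_2 I\}$ the map $M\mapsto D(p|q)$ is continuous (both $\det M^{(d+1)}$ and $\det M^{(d)}$ stay bounded away from $0$), so the block Toeplitz structure, the normalization $(M^{(d+1)})_{11}=B$, and the inequality $D(p|q)\le K$ all pass to the limit and place $M_*\in\mathcal{A}$. Hence $\mathcal{A}$ is sequentially compact, as claimed.
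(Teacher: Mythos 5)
Your proof is correct, and it takes a genuinely different route from the paper's. The paper splits the order-$d$ divergence rate into three pieces by inserting the shorter-memory conditionals $p(x_{d+1}\mid x_{d:2})$ and $q(x_{d+1}\mid x_{d:2})$: a nonnegative conditional-information term, a term bounded via $|\mathrm{Tr}(AB)|\le\|A\|_{\mathrm{F}}\|B\|_{\mathrm{F}}$ and block Cauchy--Schwarz (using that every diagonal block of $M^{(d+1)}$ equals $B$), and the order-$(d-1)$ divergence rate; induction down the order bounds every $D(p(x_{j:1})|q(x_{j:1}))$, the chain rule then bounds the KL divergence between the full joint distributions, and compactness is concluded from the (implicit) coercivity of the Gaussian joint KL as a function of the covariance matrix. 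You instead run Heine--Borel directly: the fixed trace $(d+1)\mathrm{Tr}(B)$ gives the eigenvalue ceiling; the divergence-rate constraint, written through the Gaussian KL formula, gives a floor on the innovation covariance $\Sigma_d$; and --- this is your substitute for the paper's induction --- the Loewner monotonicity $\Sigma_d\preceq\Sigma_{d-1}\preceq\cdots\preceq\Sigma_0=B$ of prediction-error covariances under stationarity, combined with the factorization $\det M^{(d+1)}=\prod_{k=0}^{d}\det\Sigma_k$, converts that floor into a uniform bound $\lambda_{\min}(M^{(d+1)})\ge\epsilon_2>0$. You also correctly identify why the lemma is nontrivial: the two $\log\det$ terms in $D$ can cancel as $M^{(d)}$ degenerates, so the constraint only directly controls the conditional covariance, not the full spectrum; the paper closes that gap by its downward induction, you by monotonicity. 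Your version buys explicitness --- a quantitative eigenvalue floor and a spelled-out closedness step, both of which the paper compresses into the single phrase ``hence $\mathcal{A}$ is compact'' --- while the paper's decomposition isolates where stationarity and Gaussianity enter and is the kind of argument one might hope to extend beyond the Gaussian case.
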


\begin{proof}
The divergence rate can be decomposed as
\begin{align*}
    D(p(x_{d+1:1})|q(x_{d+1:1})) &= \int p(x_{d+1:1})\log{\frac{p(x_{d+1} | x_{d:1})}{q(x_{d+1} | x_{d:1})}} dx_{d+1}\dots dx_1\\
    &= \int p(x_{d+1:1})\log{\frac{p(x_{d+1} | x_{d:1})}{p(x_{d+1} | x_{d:2})}} dx_{d+1}\dots dx_1 - \int p(x_{d+1:1})\log{\frac{q(x_{d+1} | x_{d:1})}{q(x_{d+1} | x_{d:2})}} dx_{d+1}\dots dx_1\\
    &+ \int p(x_{d+1:2})\log{\frac{p(x_{d+1} | x_{d:2})}{q(x_{d+1} | x_{d:2})}} dx_{d+1}\dots dx_1\\
\end{align*}
The first term is non-negative. The second term
\begin{align} \label{eq:second-term}
    \int p(x_{d+1:1})\log{\frac{q(x_{d+1} | x_{d:1})}{q(x_{d+1} | x_{d:2})}} dx_{d+1}\dots dx_1 &= (\mathrm{Tr}(Q^{(d+1)}M^{(d+1)}) - c^{(d+1)}) - (\mathrm{Tr}(Q^{(d)}M^{(d)}) - c^{(d)}),
\end{align}
where $c^{(d)}$ and $c^{(d+1)}$ are normalizing constants, is evaluated as follows. Using $|\operatorname{Tr}(AB)|\le \|A\|_{\mathrm{F}}\|B\|_{\mathrm{F}}$, we derive the
compact upper bound
\begin{align*}
|\mathrm{Tr}(Q^{(d+1)}M^{(d+1)}) - \mathrm{Tr}(Q^{(d)}M^{(d)})| &= \operatorname{Tr}\!\left(Q^{(d+1)}_{d+1,d+1} \, M^{(d+1)}_{d+1,d+1}\right)
+\sum_{k=1}^{d}
\operatorname{Tr}\!\left(
   Q^{(d+1)}_{k,d+1} \, M^{(d+1)}_{k,d+1}
 + Q^{(d+1)}_{d+1,k} \, M^{(d+1)}_{d+1,k}
\right)\\
&\;\le\;
\|Q^{(d+1)}_{d+1,d+1}\|_{\mathrm{F}}
 \,\|M^{(d+1)}_{d+1,d+1}\|_{\mathrm{F}}
+
\sum_{k=1}^{d}
\Bigl(
  \|Q^{(d+1)}_{k,d+1}\|_{\mathrm{F}}
   \,\|M^{(d+1)}_{k,d+1}\|_{\mathrm{F}}
  +
  \|Q^{(d+1)}_{d+1,k}\|_{\mathrm{F}}
   \,\|M^{(d+1)}_{d+1,k}\|_{\mathrm{F}}
\Bigr).
\end{align*}
Since $M^{(d+1)}$ is positive definite, the
block Cauchy--Schwarz inequality yields
\[
\|M^{(d+1)}_{k,d+1}\|
\;\le\;
\sqrt{\,
  \|M^{(d+1)}_{k,k}\|\;
  \|M^{(d+1)}_{d+1,d+1}\|
}\qquad (1\le k\le d),
\]
for any consistent operator norm $\|\cdot\|$. Therefore, 
{\makeatletter
\fontsize{0.8\dimexpr\f@size pt\relax}{1.1\dimexpr\f@size pt\relax}\selectfont
\begin{align*}
|\mathrm{Tr}(Q^{(d+1)}M^{(d+1)}) - \mathrm{Tr}(Q^{(d)}M^{(d)})| 
&\;\le\;
\|Q^{(d+1)}_{d+1,d+1}\|_{\mathrm{F}}
 \,\|M^{(d+1)}_{d+1,d+1}\|_{\mathrm{F}}
+
\sqrt{\|M^{(d+1)}_{d+1,d+1}\|}
\sum_{k=1}^{d}
\Bigl(
  \|Q^{(d+1)}_{k,d+1}\|_{\mathrm{F}}
   \sqrt{\|M^{(d+1)}_{k,k}\|}
  +
  \|Q^{(d+1)}_{d+1,k}\|_{\mathrm{F}}
   \sqrt{\|M^{(d+1)}_{k,k}\|}
\Bigr)\\
&= \|Q^{(d+1)}_{d+1,d+1}\|_{\mathrm{F}}
 \,\|B\|_{\mathrm{F}}
+
\sqrt{\|B\|}
\sum_{k=1}^{d}
\Bigl(
  \|Q^{(d+1)}_{k,d+1}\|_{\mathrm{F}}
   \sqrt{\|B\|}
  +
  \|Q^{(d+1)}_{d+1,k}\|_{\mathrm{F}}
   \sqrt{\|B\|}
\Bigr)\\
&< \infty
\end{align*}
}

Hence, the second term~\eqref{eq:second-term} is bounded. 
The third term is equal to $D(p(x_{d:1})|q(x_{d:1}))$ by its definition. 
By induction, there exists $\tilde{K}\ (<\infty)$ such that for every $j=1,\dots, d+1$, 
$$D(p(x_{j:1})|q(x_{j:1})) < \tilde{K}.$$
This implies that the KL divergence between the joint densities $p(x_{d+1:1})$ and $q(x_{d+1:1})$ is also bounded:
$$\int p(x_{d+1:1})\log{\frac{p(x_{d+1:1})}{q(x_{d+1:1})}} dx_{d+1}\dots dx_1 = \sum_{j=1}^{d+1} D(p(x_{j:1})|p(x_{j:1})) \leq (d+1) \tilde{K} < \infty,$$
hence $\mathcal{A}$ is a compact set in $\mathbb{S}^{p(d+1)}_{++}$.
\end{proof}

In conclusion, in these Gaussian-VAR-related cases, the divergence rate minimization problem reduces to an optimization problem on a finite-dimensional matrix space, and the existence of the optimal solution is guaranteed. The uniqueness also follows from Proposition~\ref{prop:pythagorean}. However, the existence of the optimal solution for the aforementioned divergence rate minimization problem is not trivial in general cases.
\end{document}